\documentclass[10pt, conference, letterpaper]{IEEEtran}

\usepackage{amsmath}
\usepackage{amsthm}
\usepackage{amssymb}
\usepackage{graphicx} 
\usepackage{subfigure}
\usepackage{url}
\usepackage{booktabs} 
\usepackage{array} 
\usepackage{verbatim} 
\usepackage{caption}
\usepackage{enumitem}
\usepackage{cite}
\usepackage{balance}

\usepackage{dirtytalk} 

\newtheorem{theorem}{Theorem}
\newtheorem{mylemma}{Lemma}
\newtheorem{corollary}{Corollary}

\usepackage{multirow}
\usepackage{color}

\newcommand{\hmmm}[1]{\textcolor[rgb]{0,0,0}{{#1}}}
\newcommand{\newtext}[1]{\textcolor[rgb]{0,0,0}{{#1}}}

\newcommand{\eq}[1]{Eq.~\eqref{#1}}
\usepackage{soul}
\setstcolor{red}
\newcommand{\myitem}[1]{\vspace{0.25\baselineskip}\noindent\textbf{#1}}
\newcommand{\myitemit}[1]{\vspace{0.25\baselineskip}\noindent\textit{#1}}

\definecolor{orange}{rgb}{1,0.5,0}



\usepackage{xspace}

\newcommand{\myquotation}[1]{\begin{center}\textit{``#1''}\end{center}}

\newcommand{\keyfinding}[2]{\myitem{Key finding #1:  \textit{``#2''}}}

\usepackage{algorithm}
\usepackage{algorithmicx}
\usepackage{algpseudocode}





\newcommand\blfootnote[1]{%
  \begingroup
  \renewcommand\thefootnote{}\footnote{#1}%
  \addtocounter{footnote}{-1}%
  \endgroup
}

\title{Fairness in Network-Friendly Recommendations}

\author{Theodoros Giannakas\textsuperscript{$\dagger$}, Pavlos Sermpezis\textsuperscript{$\ddagger$},
Anastasios Giovanidis\textsuperscript{$\star$}, \\
Thrasyvoulos Spyropoulos\textsuperscript{$\dagger$}, George Arvanitakis\textsuperscript{$\ddagger$}
\\~\\
\textsuperscript{$\dagger$}EURECOM,~France; firstname.lastname@eurecom.fr\\
\textsuperscript{$\ddagger$}Aristotle University of Thessaloniki,~Greece; \{sermpezis, garvanitakis\}@csd.auth.gr \\
\textsuperscript{$\star$}Sorbonne University, CNRS, LIP6,~France; anastasios.giovanidis@lip6.fr
}

\begin{document}


\maketitle


\pagestyle{plain}

\begin{abstract}
As mobile traffic is dominated by content services (e.g., video), which typically use recommendation systems, the paradigm of network-friendly recommendations (NFR) has been proposed recently to boost the network performance by promoting content that can be efficiently delivered (e.g., cached at the edge). NFR increase the network performance, however, at the cost of being unfair towards certain contents when compared to the standard recommendations. This unfairness is a side effect of NFR that has not been studied in literature. Nevertheless, retaining fairness among contents is a key operational requirement for content providers. This paper is the first to study the fairness in NFR, and design fair-NFR. Specifically, we use a set of metrics that capture different notions of fairness, and study the unfairness created by existing NFR schemes. Our analysis reveals that NFR can be significantly unfair. We identify an inherent trade-off between the network gains achieved by NFR and the resulting unfairness, and derive bounds for this trade-off. We show that existing NFR schemes frequently operate far from the bounds, i.e., there is room for improvement. To this end, we formulate the design of Fair-NFR (i.e., NFR with fairness guarantees compared to the baseline recommendations) as a linear optimization problem. Our results show that the Fair-NFR can achieve high network gains (similar to non-fair-NFR) with little unfairness.
\end{abstract}

\blfootnote{This research is co-financed by Greece and the European Union (European Social Fund-ESF) through the Operational Programme ``Human Resources Development, Education and Lifelong Learning'' in the context of the project ``Reinforcement of Postdoctoral Researchers - 2nd Cycle'' (MIS-5033021) implemented by the State Scholarships Foundation (IKY), and the Operational Program Competitiveness, Entrepreneurship and Innovation under the call RESEARCH-CREATE-INNOVATE (project T2EDK-04937). It is also funded by the ANR (French National Agency of Research) by the ``FairEngine'' project under grant ANR-19-CE25-0011, by the ANR \say{5C-for-5G} project under grant ANR-17-CE25-0001, and the IMT F\&R, \say{Joint Optimization of Mobile Content Caching and Recommendation} project.}

\section{Introduction}\label{sec:introduction}
\myitem{Background.} The paradigm of network-friendly recommendations (NFR) has been very recently proposed as a promising solution for improving the quality and/or the cost of content delivery
~\cite{sch-chants-2016,chatzieleftheriou2017caching,sermpezis2018soft,giannakas-wowmom-2018,kastanakis-cabaret-mecomm-2018,kastanakis2020network,zhu2018coded,chatzieleftheriou2019jointly,costantini2019approximation,garetto2020similarity,qi2018optimizing,giannakas2019order,chatzieleftheriou2019joint,gupta2019effect,lin2018joint,song2018making,lin2019content,giannakas2020soba,sermpezis2019towards,cache-centric-video-recommendation,content-recommendation-swarming}. NFR is based on the fact that content 
traffic dominates the mobile traffic today~\cite{cisco2018,ericsson2018} and the majority of content services (online video, radio, social networks, etc.) employ recommendation systems (RS), which heavily affect the user choices and shape the content demand~\cite{RecImpact-IMC10, gomez2016netflix}. The main idea behind NFR is to nudge the recommendations of the RS of the content provider towards content that can be delivered in a ``network-friendly'' way (e.g., cached in the mobile edge~\cite{chatzieleftheriou2017caching,sermpezis2018soft,giannakas-wowmom-2018,kastanakis-cabaret-mecomm-2018,kastanakis2020network,zhu2018coded,chatzieleftheriou2019jointly,costantini2019approximation,garetto2020similarity}, or coded broadcast transmissions~\cite{lin2018joint,song2018making,lin2019content}), thus shaping the user demand in favor of this content.

The NFR paradigm involves three main parties: the network, the users, and the content provider. 
Existing works design NFR schemes that explicitly aim to benefit the \textit{network}. Indeed, the envisioned network gains (lower load, congestion, resources, or costs) have been shown to be very promising~\cite{sch-chants-2016,chatzieleftheriou2017caching,sermpezis2018soft,giannakas-wowmom-2018,kastanakis-cabaret-mecomm-2018,kastanakis2020network,zhu2018coded,lin2018joint,song2018making,qi2018optimizing,chatzieleftheriou2019jointly,giannakas2019order,chatzieleftheriou2019joint,gupta2019effect,lin2019content,costantini2019approximation,garetto2020similarity,giannakas2020soba}. Moreover, the \textit{user} experience can improve as well due to the higher satisfaction from high quality content delivery~\cite{sermpezis2019towards}. Finally, there can be benefits for the \textit{content provider} (e.g., higher user engagement); however, those have only been envisioned as a consequence of the higher user satisfaction, but have not been explicitly studied.


\myitem{The problem: Fairness in NFR.} 
To enable network benefits through NFR, the ``cost'' to be paid by the RS is that NFR (a) nudge the optimal recommendations list provided to users, which may lead to worse user satisfaction, and (b) bias the demand for different contents (by making some contents more and others less popular), which may lead to displeasure from the content owners/producers (e.g., YouTubers). The former (user perspective) has been explicitly taken into account in NFR schemes, by considering the \textit{quality of recommendations (QoR)} in the nudged recommendations, e.g., by imposing a minimum threshold in the content similarity~\cite{giannakas-wowmom-2018} or a window of user preferences~\cite{chatzieleftheriou2017caching}. However, the latter (content provider perspective) has been overlooked in related literature. In fact, the shaping of the content demand relates to the \textit{fairness} of a RS towards the content producers/owners, which is a key requirement for content providers and has attracted a lot of attention recently in the design of RS~\cite{abdollahpouri2019multi,burke2017multisided,burke2018balanced,edizel2020fairecsys,patro2020incremental,pessach2020algorithmic,Sacharidis2019ACA,steck2018calibrated,yang2017measuring,liu2018personalizing}.


On one hand, some unfairness due to NFR may be acceptable by the content providers under some conditions (during periods of network congestion, peak hours, etc.), in order to better satisfy the users or increase their engagement by avoiding serving them content in poor quality. However, previous works have not studied \textit{how much unfairness is created by the NFR schemes}, and whether this is acceptable by the content provider. 
%
%
On the other hand, a content provider may need to satisfy some explicit fairness requirements for the contents (or, the content producers/owners), e.g., not allow a change in the demand larger than 5\%. Up to now, this is not an option in the existing NFR schemes, since \textit{fairness requirements have not been considered as a design aspect in NFR}. 


\myitem{Contributions.} Motivated by this gap in literature, this paper is the first to study the aspect of fairness in NFR:
\begin{itemize}[leftmargin=*,nosep]
\item \textit{\textbf{Fairness characterization.}} 
We use metrics that capture different notions of fairness in RS (Section~\ref{sec:preliminaries}), and then quantify the unfairness created in a wide range of representative scenarios and NFR algorithms, and investigate the role of different system parameters (Section~\ref{sec:characterization}).

\item \textit{\textbf{The fairness vs. network gain trade-off.}} We identify an inherent trade-off between the network gains achieved by a NFR scheme and the resulting unfairness. We analytically study this 
trade-off and derive bounds
. We show that existing NFR schemes, frequently operate far from the optimal operating point that is given by the bound (Section~\ref{sec:bounds}).

\item \textit{\textbf{Optimal Fair-NFR.}} We formulate the problem of designing NFR that maximize the network gain, under fairness guarantees compared to the baseline RS
. Through a series of transformations, we show that the problem of optimal fair-NFR can be expressed as a linear program (Section~\ref{sec:design}).

\item \textit{\textbf{The price of fairness.}} Studying the performance of the Fair-NFR scheme shows that 
\textit{by allowing a little unfairness, high network gains can be achieved}, which is a promising message for the NFR paradigm. A comparison with (non-fair) NFR schemes demonstrates that \textit{the Fair-NFR scheme achieves equal gains with much less unfairness} (Section~\ref{sec:price}).
\end{itemize}

\section{Preliminaries}\label{sec:preliminaries}
\subsection{Network-friendly Recommendations}\label{sec:preliminaries-model}

We consider a content service that has integrated in its (web/mobile) platform a recommendation system (RS). When a user is in the platform and consumes (e.g., watches, listens to, reads, buys) a content, a list of recommendations is presented by the RS suggesting to her to consume another content next. This is a typical scenario for the majority of online video/radio services (e.g., YouTube, Netflix, Spotify), news sites, e-shops and online marketplaces (e.g., Amazon), online social networks (e.g., Facebook, Instagram), etc. In the following, we describe the generic setup considered in NFR; the main notation is summarized in Table~\ref{tab:notation}.

\myitem{Content service.} Assume that the service has a content catalog $\mathcal{K}$ ($|\mathcal{K}|=K$). Users request contents in two ways: (i) directly, e.g., by following an external link or typing the content through a search bar, or (ii) by following one of the recommendations provided by the RS of the service (users typically consume several contents when visiting the service). These are the main types of demand in most content services.

We define the \textit{demand} $p_{i}$ for a content $i$ as the fraction of all requests (i.e., direct and through recommendations) that are for this content; we denote as $\mathbf{p} = [p_{1}, ..., p_{K}]$ the vector with the distribution of total demand for all contents.

\myitem{Network.} We assume that a subset of the content catalog $\mathcal{C}\subset\mathcal{K}$ can be delivered with low cost for the network (and/or in high quality). For instance, in the context of mobile edge caching considered by the majority of related work in NFR~\cite{sch-chants-2016,chatzieleftheriou2017caching,sermpezis2018soft,giannakas-wowmom-2018,zhu2018coded,chatzieleftheriou2019jointly,costantini2019approximation}, the contents in $\mathcal{C}$ are cached in the mobile edge. In this context, and w.l.o.g., we set the cost for delivering contents in $\mathcal{C}$ to zero and the cost for the other contents to $1$. Hence, the cache hit ratio, $CHR = \sum_{i\in\mathcal{C}}p_{i}$, captures the total benefit for the network (i.e., the decrease in the cost by using a cache).

\myitem{Recommendations.} We assume a ``recommendation score'' $u_{ij}$ for every pair of contents $i,j\in\mathcal{K}$, which indicates how good a recommendation for content $j$ after content $i$ is. The score $u_{ij}$ may correspond to the similarity between two contents, or more generally to the relevance of recommending $j$ after $i$ (e.g., capturing from item-item collaborative filtering~\cite{sarwar2001item} to black-box deep learning architectures~\cite{covington2016deep}), and can be the output of any state-of-the-art RS. W.l.o.g, we assume $u_{ij}\in[0,1]$ and higher values denote better recommendations. 

\textit{Baseline RS (BS-RS)} is the standard RS (i.e., non network-friendly) that generates the recommendation scores $u_{ij}$ and is used in production by the content/service provider.
After a user has consumed content $i$, the BS-RS recommends to the user a list $R_{i}^{BS}$ that contains the $N$ contents with the highest recommendation score values $u_{ij}$. 

\textit{Network-friendly RS (NF-RS)} is a RS that takes into account the network conditions (e.g., delivery cost~\cite{giannakas-wowmom-2018,sermpezis2019towards}, cached contents~\cite{chatzieleftheriou2019jointly,sermpezis2018soft}, wireless channel~\cite{song2018making,lin2019content}) and provides a list of recommendations $R_{i}^{NF}$ to the user. In general, the lists $R_{i}^{NF}$ can be the same as those of the BS-RS $R_{i}^{BS}$, partially overlap with them, or be totally disjoint sets. Typically, the recommendations of NF-RS tend to (i) include more recommendations to contents that can be delivered in a network-friendly way (e.g., cached contents), while (ii) trying to maintain the quality of recommendations (QoR) by recommending contents with relatively high scores $u_{ij}$. In a simple example, with one user, three contents $a,b,c$ with scores $u_{a}=1$, $u_{b}=0.8$, $u_{c}=0.5$, and a BS-RS recommending only one content $R^{BS}=[a]$. Let only $b,c\in\mathcal{C}$ be cached; then the NF-RS would recommend $R^{NF}=[b]$, because this would bring network gains, and would have QoR$=\frac{u_{b}}{u_{a}}=0.8$, which is higher than if $c$ was recommended instead of $b$ (QoR=0.5).

Finally, the resulting demand $\mathbf{p}$ depends on the underlying RS: a RS that selects more frequently a content $i$ in the recommendation lists, will lead to an increase in the demand $p_{i}$. In the remainder, we denote with a superscript the RS that corresponds to the content demand, e.g.,  $\mathbf{p^{BS}}$ for the BS-RS and $\mathbf{p^{NF}}$ for a NF-RS. The differences between the vectors $\mathbf{p^{BS}}$ and $\mathbf{p^{NF}}$ capture the notion of fairness, which we formally define below.

\begin{table}
\caption{Important notation.}
\label{tab:notation}
\centering
\begin{tabular}{|ll|}
\hline
$\mathcal{K}$   & content catalog  ($|\mathcal{K}|=K$) \\
\hline
$p_{i}$  & demand for content $i$;  $\mathbf{p} = \left[p_{1}, ..., p_{K}\right]$ and $\sum_{i\in\mathcal{K}}p_{i}$=1\\
\hline
$\mathcal{C}$   & set of cached contents ($|\mathcal{C}|=C$) \\
\hline
$CHR$ & cache hit rate, $CHR = \textstyle \sum_{i\in\mathcal{C}} p_{i}$\\
\hline
$u_{ij}$ & recommendation score, $u_{ij}\in[0,1]$\\
\hline
$R_{i}$ & list of recommendations after content $i$\\
\hline
$N$ & number of recommendations.\\
\hline
$\alpha$ & probability a user to follow a recommendation\\
\hline
$p^{(d)}_{i}$  & probability that a ``direct request'' is for content $i\in\mathcal{K}$;\\&$\mathbf{p^{(d)}} = \left[p^{(d)}_{1}, ..., p^{(d)}_{K}\right]$ and  $\sum_{i\in\mathcal{K}}p^{(d)}_{i}=1$\\
\hline
$G$ & network gain, $G = CHR^{NF} - CHR^{BS}$\\
\hline
\end{tabular}
\end{table}

\subsection{Fairness Definition}\label{sec:fairness-definition}
Content providers aim to satisfy two parties, their users (consumers) and the content owners (producers)~\cite{abdollahpouri2019multi,mehrotra2018towards,burke2017multisided,burke2018balanced,Sacharidis2019ACA}, while at the same time maximizing their own utility (e.g., revenue)~\cite{abdollahpouri2019multi}. In general, the goal of a \textit{fair RS} is to strike a balance between \textit{utility} and \textit{satisfaction} of the involved parties~\cite{Sacharidis2019ACA,mehrotra2018towards}.

In the context of NFR, 
user satisfaction is taken into account with the concept of quality of recommendations (QoR). However, the content owner/producer satisfaction, which is identified as a key component in the design of fair RS, especially in multistakeholder settings~\cite{abdollahpouri2019multi,burke2017multisided}, has been neglected in previous works in NFR. Hence, we focus on the need of the content provider to satisfy content owners/producers, by providing recommendations that are fair with respect to them 
(which in literature is referred to also as \textit{p-fairness}~\cite{abdollahpouri2019multi,burke2017multisided}).

Fairness in RS can be defined in several ways~\cite{abdollahpouri2019multi,burke2017multisided,burke2018balanced,edizel2020fairecsys,patro2020incremental,pessach2020algorithmic,Sacharidis2019ACA,steck2018calibrated,yang2017measuring}, depending on the system, the involved parties, the needs of the content provider, etc. 
The fairness of a RS can be measured with respect to the recommendations of \textit{a fair RS}. In our setting, where the goal is to quantify the (un)fairness of NFR, this fair RS is by convention the BS-RS (i.e., any standard RS) and the fairness captures the deviation in the total demand $\mathbf{p}$ created by the NF-RS. Thus, a generic measure $F$ can be used:
\begin{equation}
F = f(\mathbf{p^{BS}}, \mathbf{p^{NF}})   
\end{equation}
In general, the function $f$ can be defined at will according to the use case or requirements of the content provider. For example,~\cite{Sacharidis2019ACA} suggests that $f$ can be any probability divergence measure. Different measures $f$ can capture different notions of fairness. In this paper, we consider the three fairness measures that are most commonly used in literature and practice\footnote{Note that, since we aim to capture the fairness in recommendations, we use metrics from the RS field. Other fairness measures from other fields, e.g., resource allocation~\cite{lan2010axiomatic}, would be less relevant.}:

\begin{description}
\item[F-max] $F_{max} = \max_{i\in\mathcal{K}}|p_{i}^{NF}-p_{i}^{BS}|$ relates to the \textit{individual fairness}~\cite{pessach2020algorithmic} and accounts for the ``worst case'', i.e., no content will have a demand difference larger than $F_{max}$.

\item[F-tv] $F_{tv} = \frac{1}{2}\cdot \sum_{i\in\mathcal{K}}|p_{i}^{NF}-p_{i}^{BS}|$ is the \textit{total variation distance} between the two distributions, i.e., the average (absolute) change in the content demand~\cite{patro2020incremental}. It allows more flexibility than $F_{max}$ in shaping the demand, since it does not impose a constraint for every single content; e.g., a large difference in a content demand can be compensated by small demand differences in other contents.

\item[F-kl] $F_{kl} =  \sum_{i\in\mathcal{K}}p_{i}^{BS}\cdot \log \left(\frac{p_{i}^{BS}}{p_{i}^{NF}}\right)$ is the Kullback–Leibler (KL) divergence, a widely used measure for the difference between distributions, and commonly used to quantify fairness in RSs~\cite{Sacharidis2019ACA, steck2018calibrated, yang2017measuring}. $F_{kl}$ is more sensitive to changes in contents with lower demand, e.g., an increase $\Delta p$ in the demand $p_{i}^{BS}$ leads to a higher increase in $F_{kl}$ when $p_{i}^{BS}$ is small~\cite{steck2018calibrated}.
\end{description}
\textit{Remark:} Note that $F_{max}, F_{tv}\in[0,1]$, whereas $F_{kl}\in[0,\infty]$ ($F_{kl}\rightarrow\infty$ when $p_{i}^{NF}=0$ and $p_{i}^{BS}\neq0$). For the sake of presentation, in the results we normalize the values of $F_{kl}$ so that it takes values in $[0,1]$ and is comparable with the other metrics. In particular, we use the smoothed version of~\cite{Sacharidis2019ACA, steck2018calibrated}, where we substitute $p_{i}^{NF} \rightarrow (1-w)\cdot p_{i}^{NF} + w\cdot p_{i}^{BS}$, with $w=0.01$ and normalize with its upper bound $\log\frac{1}{w}$; i.e.,
\begin{equation*}
\textstyle
F_{kl} =  \frac{1}{\log\frac{1}{w}}\cdot \sum_{i\in\mathcal{K}}p_{i}^{BS}\cdot \log \left(\frac{p_{i}^{BS}}{(1-w)\cdot p_{i}^{NF} + w\cdot p_{i}^{BS}}\right)    
\end{equation*}

\hmmm{The above metrics reflect different notions of fairness and requirements of the content provider. In general, it is not possible to satisfy all notions of fairness at the same time~\cite{pessach2020algorithmic}. In this paper, we consider all these metrics, and study their characteristics in relation to NFR (Sections~\ref{sec:characterization} and~\ref{sec:bounds}) and take them into account in the design of fair NF-RS (Section~\ref{sec:design}).}

\myitem{QoR vs. fairness.} \newtext{As a remark, we stress that the notions of QoR (considered in previous works) and fairness (not considered before) describe \emph{different} quantities in NFR; the former relates to the satisfaction of the users/consumers, and the latter to the satisfaction of the content owners/producers. The following example demonstrates this distinction: Let two users and three contents $a,b,c$ with scores $u_{a}=1$, $u_{b}=0.8$, $u_{c}=0.8$ (same for both users), and ${b,c}\in\mathcal{C}$, i.e., are cached. The BS-RS recommends content $a$, with the highest score $u$, to both users. Let's assume two NF-RS that nudge the BS-RS recommendations towards cached contents: A NF-RS recommends $b$ to both users, and another NF-RS recommends $b$ to the first user and $c$ to the second user. Since, $u_{b}=u_{c}$ the QoR in both NF-RS is the same. However, the former NF-RS is less fair, e.g., in terms of \textit{F-max}, since it increases twice the demand for content $b$ compared to the latter NF-RS.}

\section{Characterization of Unfairness in NFR}\label{sec:characterization}
In this section, we aim to understand the (un)fairness $F$ in NFR. To this end, we employ an empirical approach where we (i) consider a wide range of scenarios, (ii) apply the BS-RS and different NF-RS algorithms that have been proposed in previous works, and (iii) calculate the resulting content demand and its unfairness (Section~\ref{sec:sim-setup}). We analyze the results to investigate whether existing NFR schemes create unfairness, and what are the key factors that cause it (Section~\ref{sec:characterization-analysis}).



\subsection{Simulation Setup}\label{sec:sim-setup}

\myitem{Content catalogs.} We consider content catalogs and matrices $\mathbf{U}=\{u_{ij}\}$ extracted from two datasets of real services:

\myitemit{Last.fm.} We use a dataset from the Last.fm database~\cite{lastfm-related-content-dataset}, where we applied the ``getSimilar'' method to the content IDs' and populate the matrix $\mathbf{U}$. As the resulting $\mathbf{U}$ matrix is quite sparse, for the purpose of demonstration, we keep the largest component of the underlying graph, and round to $u_{ij} = 1$ the values above a threshold of $0.1$.

\myitemit{MovieLens.} We use the Movielens movies-rating dataset~\cite{movielens-related-dataset}, containing $69162$ ratings (0 to 5 stars) of $671$ users for $9066$ movies. We apply an item-to-item collaborative filtering (using 10 most similar items) to extract the missing user ratings, and then use the cosine distance to calculate the similarity for each pair of contents. We set $u_{ij}=1$ for contents with cosine distance larger than $0.6$, and $0$ otherwise.



\myitem{Caching.} We consider cache sizes $C\in \{5,10,20\}$, with a popularity-based caching policy, i.e., the cache contains the $C$ contents with the highest demand under the BS-RS ($\mathbf{p^{BS}}$).

\myitem{Content demand.} Similarly to previous works~\cite{chatzieleftheriou2019jointly,giannakas-wowmom-2018,giannakas2019order,zhu2018coded,lin2018joint,lin2019content}, we assume that a user follows a recommendation with probability $\alpha$, or directly requests a content with probability $1-\alpha$. We set $\alpha \in \{0.5, 0.8, 0.99\}$, to capture the behavior reported for YouTube ($\alpha$=0.5)~\cite{RecImpact-IMC10} and Netflix~($\alpha$=0.8)~\cite{gomez2016netflix} and an extreme value where users follow almost always recommendations ($\alpha$=0.99), e.g. as in YouTube autoplay or online radio services like Last.fm, Jango, etc..

We assume that direct requests for different contents follow a Zipf distribution with exponent $s$, where we used a typical scenario with $s=1$~\cite{RecImpact-IMC10} and an extreme scenario with $s=0$ (i.e., uniform distribution). We denote the distribution of direct requests as $\mathbf{p^{(d)}}$.

\begin{table}
\centering
\caption{Parameters of the simulation scenarios (in total, all their combinations give 1296 scenarios)}
\label{tab:sim-paramaters}
\begin{tabular}{|l|l|}
\toprule
$\mathbf{U}$: last.fm ($K=757$), Movielens ($K=1060$)  &
$N\in \{2,5,10\}$ \\
$\mathbf{p^{(d)}}\sim \{Zipf(s=1), uniform\}$ &
$C\in\{5,10, 20\}$ \\
$\alpha\in\{0.5, 0.8, 0.99\}$ &
$q\in \{0.5, 0.8, 0.9\}$\\
$W_{BFS}\in \{N, 2N\}$ & 
$D_{BFS}\in\{1,2\}$\\
\bottomrule
\end{tabular}
\end{table}

\myitem{NF-RS algorithms.} Several NFR variants have been proposed
. To avoid restricting our study to a single algorithm or setup, we consider three representative NF-RS algorithms. 

\myitemit{Greedy NF-RS} includes in each recommendation list $R_{i}$ as many cached contents 
as possible, without violating a minimum QoR threshold $q$. 
It aims to maximize the CHR by considering every request independently (without taking into account the long term performance). It can be seen as a simplified version of only the recommendation part of the CawR algorithm~\cite{chatzieleftheriou2019jointly} (with the cache assumed already filled)\footnote{We note that CawR~\cite{chatzieleftheriou2019jointly} optimizes at the same time the caching and recommendation policies. Since the scope of this paper is on the fairness of the recommendations in NF-RS, we focus on the resulting recommendations of NF-RS algorithms, given a pre-filled cache\hmmm{; we discuss implications of joint NF-RS and caching policy optimization algorithms in Section~\ref{sec:conclusion}}.}, or the ``Myopic'' version of CARS~\cite{giannakas-wowmom-2018}. 

\myitemit{Multi-step NF-RS}~\cite{giannakas2019order} is an algorithm that includes in each recommendation list $R_{i}$ a set of contents that satisfy a QoR constraint (similarly to the Greedy NF-RS) and maximizes the network gains in the long term, i.e., by taking into account requests made directly and through recommendations, and the probability $\alpha$. It returns the optimal solution in our model setup under \emph{no fairness} requirements.

\myitemit{CABaRet}~\cite{kastanakis-cabaret-mecomm-2018} follows a different approach, by leveraging the BS-RS and assuming no explicit knowledge on the scores $u_{ij}$. For each content $i$, it does a breadth-first search (BFS) starting from the list $R_{i}^{BS}$ (depth 1), and then to the lists $R_{j}^{BS}$, $\forall j \in R_{i}^{BS}$, (depth 2), and so on. It returns a recommendation list that contains the cached contents found in the BFS and, if needed, fills the list with the initial recommendations $R_{i}^{BS}$. 

In all cases recommendation lists are of size $N\in\{2,5,10\}$.

\myitem{Quality of Recommendations (QoR).} In the {Greedy} and {Multi-step NF-RS}, the QoR constraint is explicitly defined as a fraction of the recommendation quality of the BS-RS by a parameter $q\in[0,1]$, i.e., $\sum_{j\in R_{i}^{NF}} u_{ij} \geq q\cdot \sum_{j\in R_{i}^{BS}} u_{ij}$~\cite{giannakas-wowmom-2018,giannakas2019order}. In {CABaRet}, the QoR is implicitly determined by the width $W_{BFS}$ and depth $D_{BFS}$ parameters of the BFS~\cite{kastanakis-cabaret-mecomm-2018}. In our simulations, we consider values $q\in\{0.5, 0.8, 0.9\}$, and $W_{BFS}\in \{N, 2 N\}$ and $D_{BFS} = \{1,2\}$.

Table~\ref{tab:sim-paramaters} summarizes the parameters of the considered scenarios. In total, we simulated 1296 scenarios, accounting \emph{for all the combinations} of the parameters.

\subsection{Unfairness in NFR}\label{sec:characterization-analysis}

In the scenarios we simulate, we calculate the content demand under the BS-RS ($\mathbf{p^{BS}}$) and the different NF-RS algorithms ($\mathbf{p^{NF}}$), and then the resulting unfairness captured with the metrics $F(\mathbf{p^{NF}},\mathbf{p^{BS}})$ defined in Section~\ref{sec:fairness-definition}. 

\myitem{Unfairness in existing NF-RS algorithms.} We first quantify the unfairness created by existing NF-RS algorithms. Figure~\ref{fig:cdf-fairness} presents the CDF of the values of the fairness metrics $F$ among all the scenarios we tested\hmmm{; large values of the fairness metric $F$ denote more unfair systems (e.g., the system is fair for $F$=0 and very unfair for $F$=1)}. We see that \textit{the NF-RS algorithms create unfairness, which is very high in several cases} (we remind that the presented $F$ metrics take values in $[0,1]$). Moreover, comparing the curves of the different metrics (or, notions) of fairness, we can see that  $F_{max}$ that captures the individual fairness takes lower values, whereas $F_{tv}$ that is averaged over all contents takes the highest values (even up to $1$). The CDF of $F_{kl}$, which considers all contents while also giving emphasis on individual contents whose demand deviates a lot from $\mathbf{p^{BS}}$, lies between the other two metrics.

\begin{figure}%
\centering
\subfigure[b][CDF of unfairness]{
\begin{minipage}[b]{0.45\columnwidth}
\includegraphics[width=1\columnwidth]{./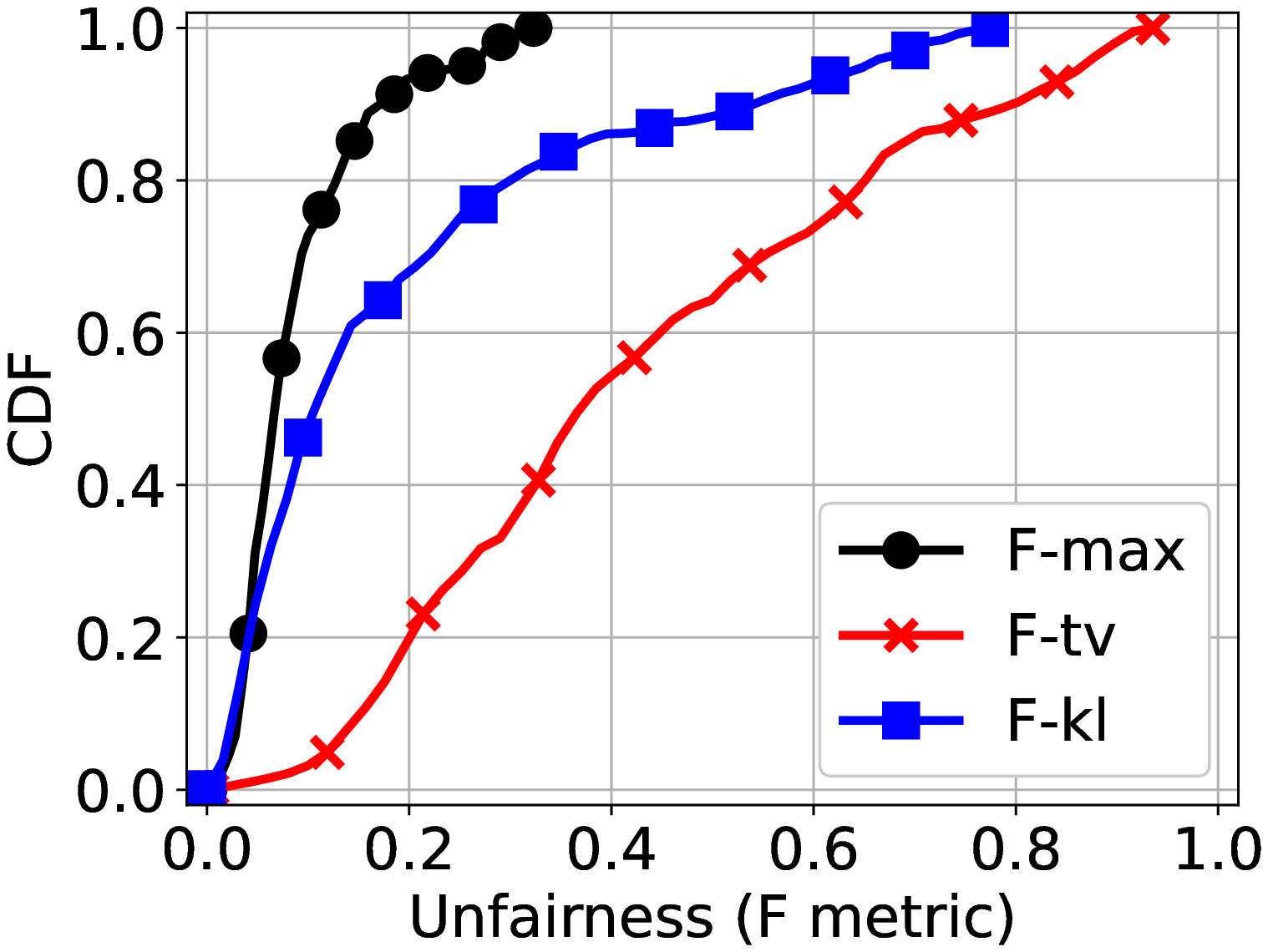}
\end{minipage}
\label{fig:cdf-fairness}
}
\centering
\subfigure[b][Unfairness vs. QoR]{
\begin{minipage}[b]{0.45\columnwidth}
\includegraphics[width=1\columnwidth]{./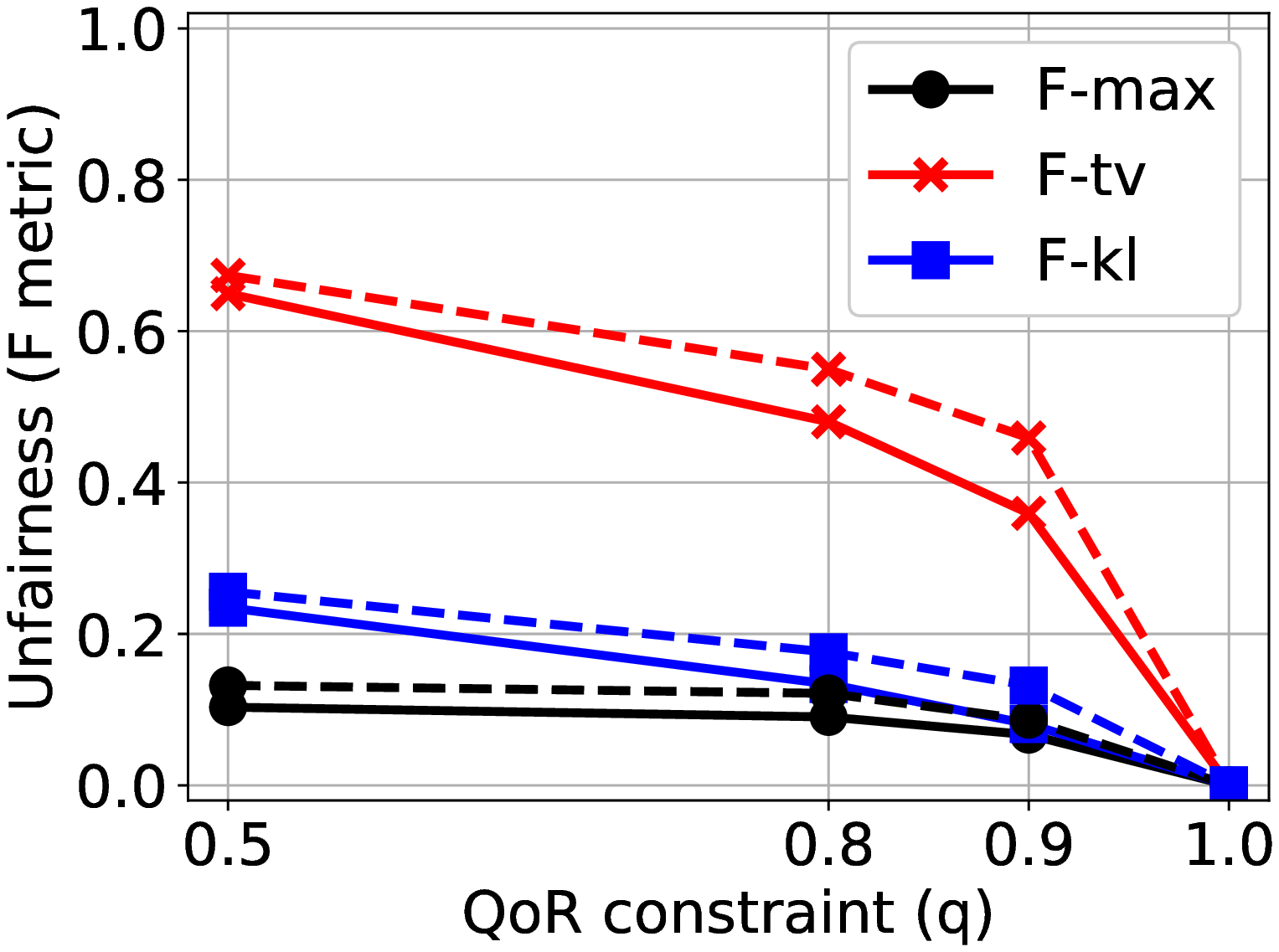}
\end{minipage}
\label{fig:fairness-vs-qor}
}
\caption{(a) CDF of unfairness created by the NF-RS algorithms in all scenarios (Table~\ref{tab:sim-paramaters}). (b) Unfairness vs. QoR, in Movielens scenarios with $\alpha$=0.8, $N$=5, $C$=10, uniform $\mathbf{p^{(d)}}$, Greedy (continuous lines) and Multi-step (dashed lines) NF-RS.}
\end{figure}

\begin{figure*}
\centering
\subfigure[$F_{max}$]{\includegraphics[width=0.49\columnwidth]{./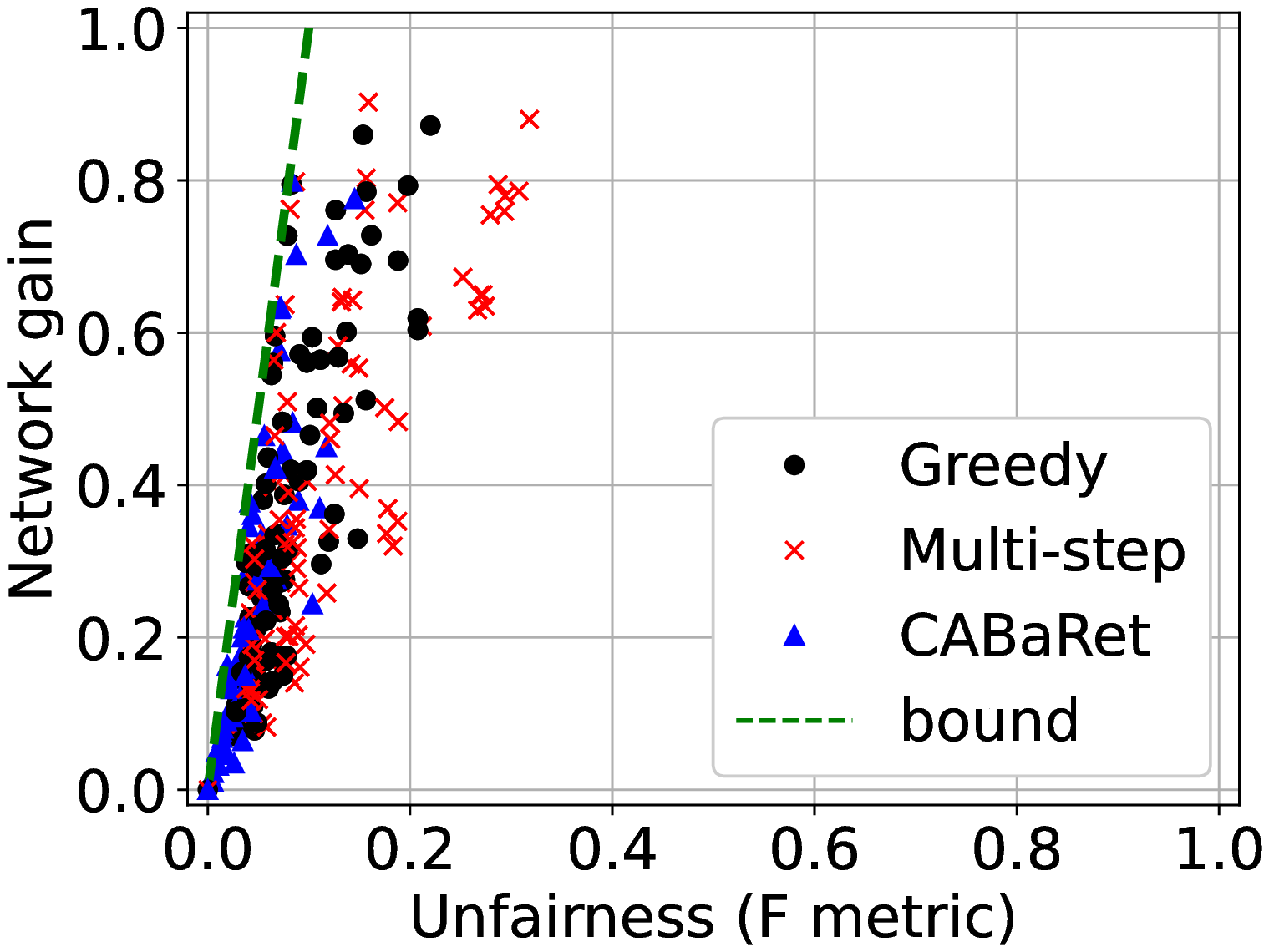}\label{fig:fairness-vs-gain-max}}
\subfigure[$F_{tv}$]{\includegraphics[width=0.49\columnwidth]{./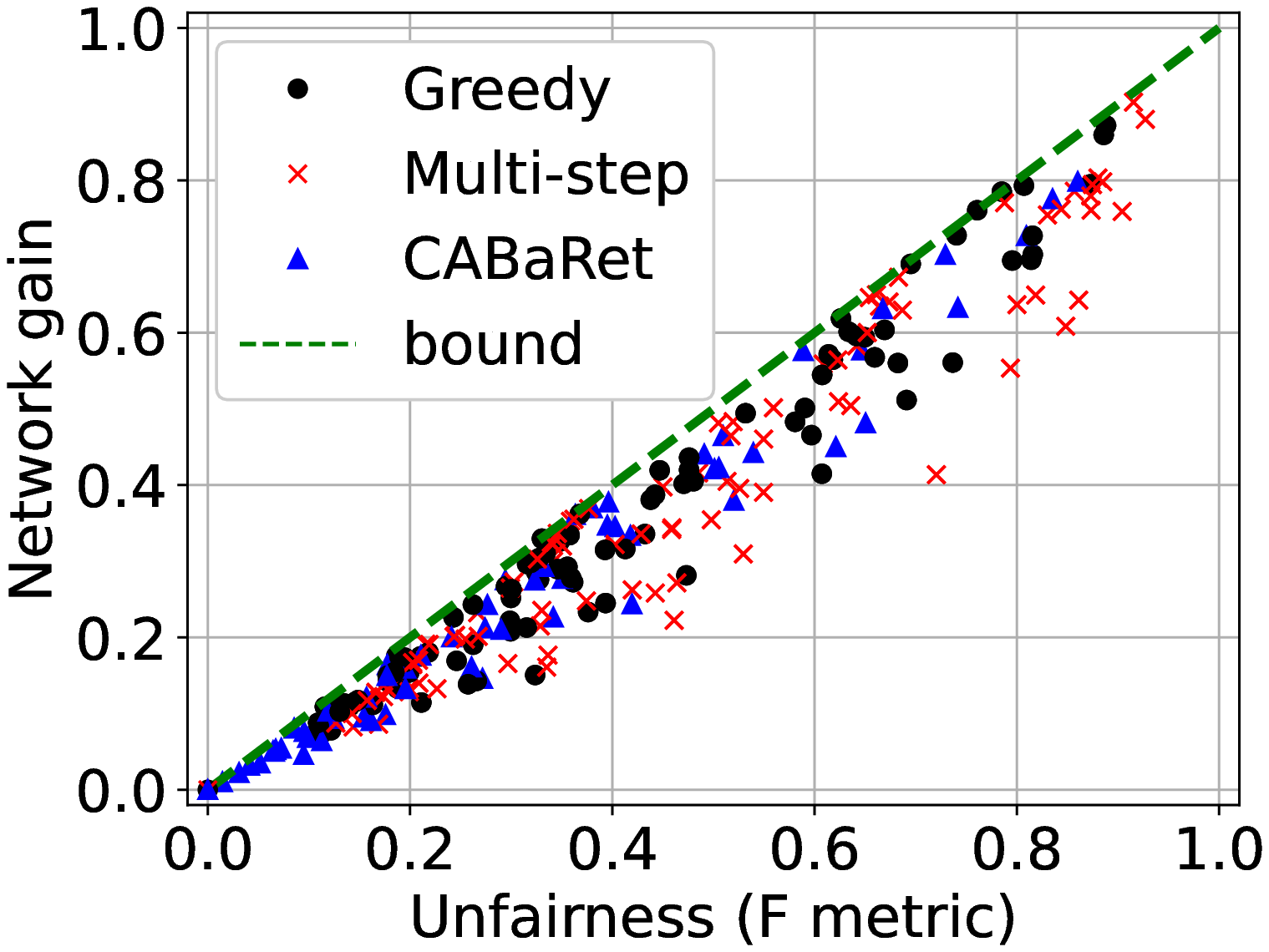}\label{fig:fairness-vs-gain-avg}}
\subfigure[$F_{kl}$]{\includegraphics[width=0.49\columnwidth]{./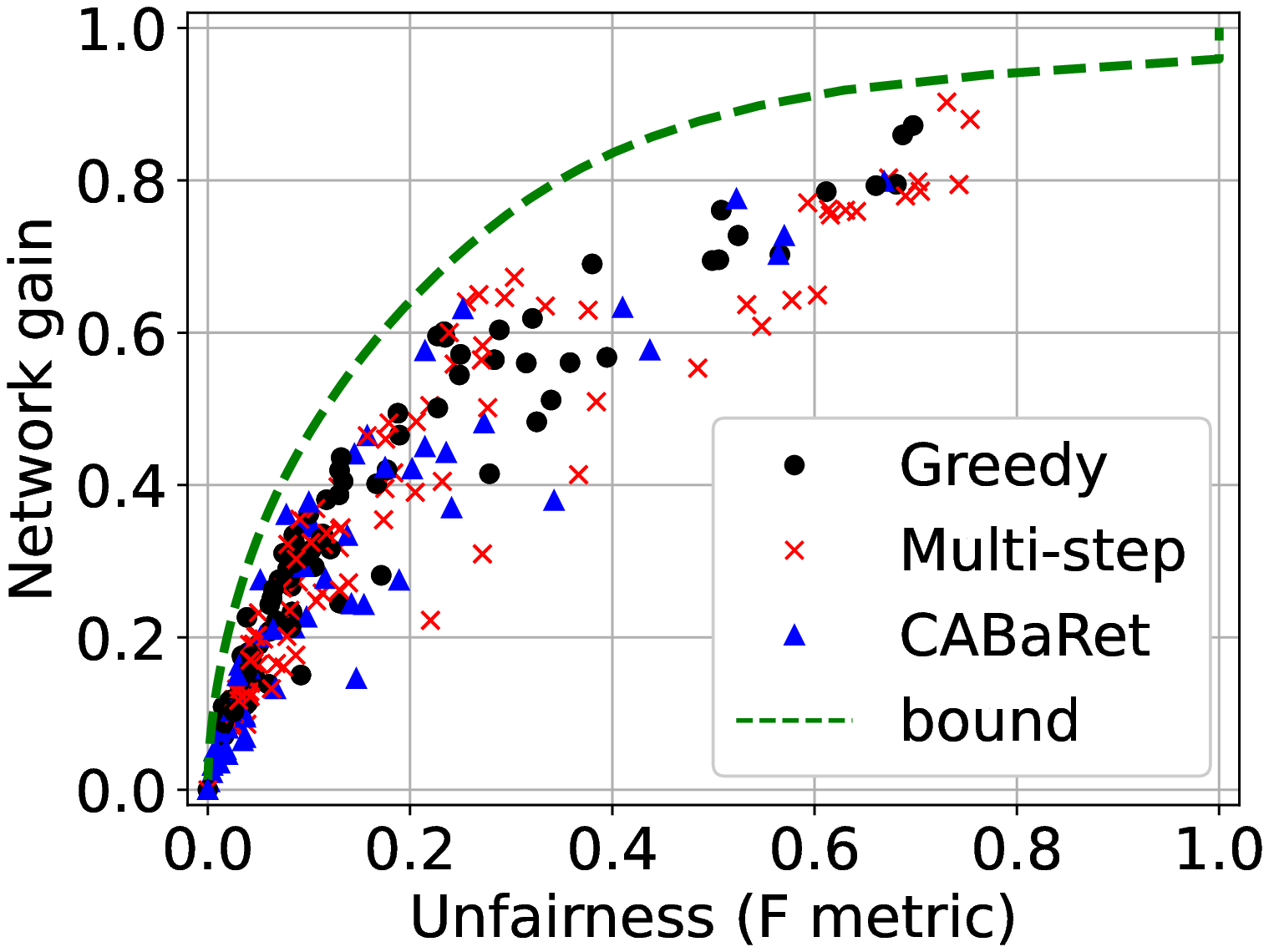}\label{fig:fairness-vs-gain-kl}}
\subfigure[CDF of distance from bounds]{\includegraphics[width=0.49\columnwidth]{./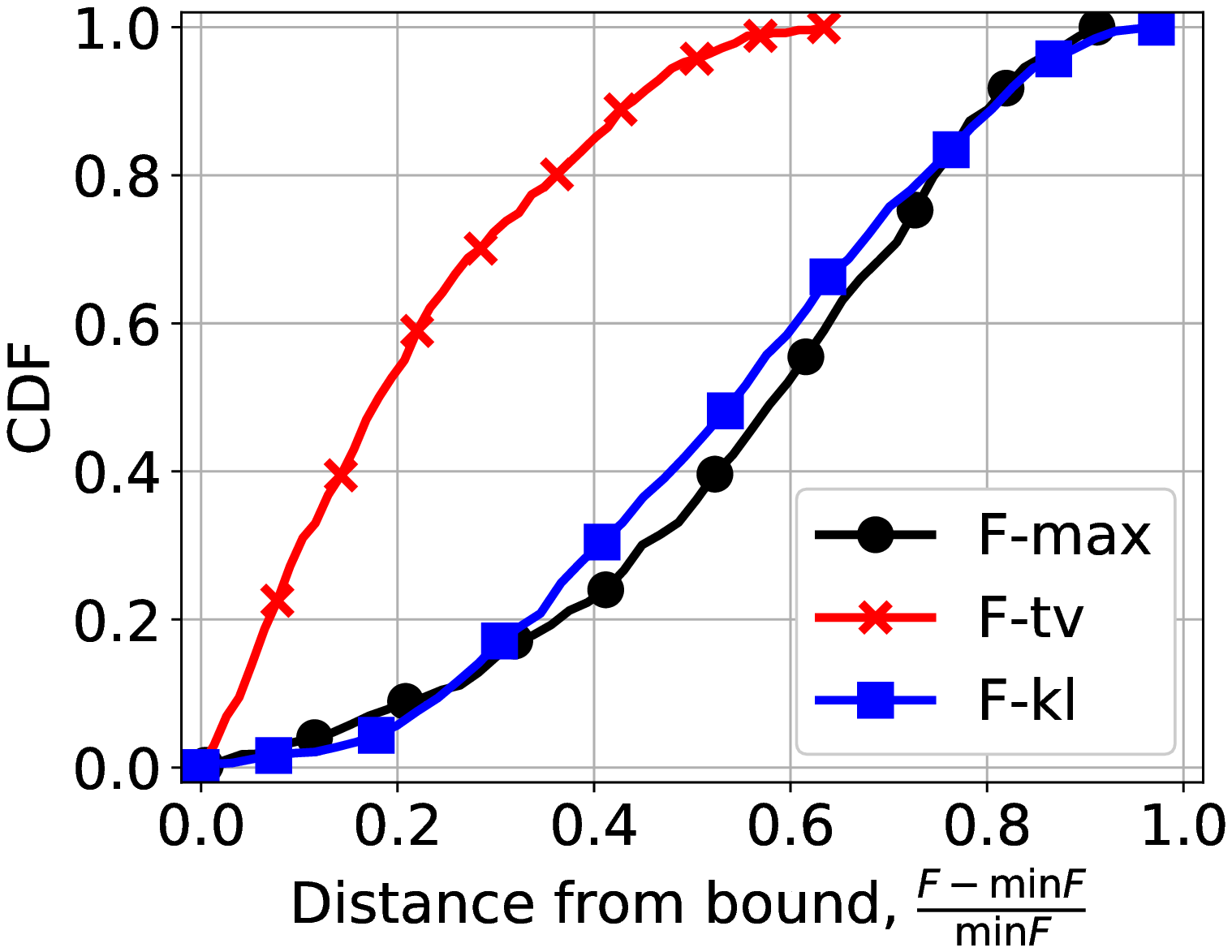}\label{fig:cdf-distance-from-bound}}
\caption{\hmmm{(a),(b),(c):} Network gain $G$ ($y$-axis) vs. fairness metric $F$ ($x$-axis) in all scenarios of Table~\ref{tab:sim-paramaters} with $C=10$, and under the \textit{Greedy} (circles), \textit{Multi-step} (crosses), \textit{CABaRet} (triangles) NF-RS. The bounds are denoted with dashed lines. \hmmm{(d): CDF of the relative distance $\frac{F-\min F}{\min F}$ of the operating points $(F,G)$ of the NF-RS algorithms from the corresponding bounds $(\min F, G)$.}}
\label{fig:fairness-vs-gains}
\end{figure*}

\myitem{The role of the QoR.} Figure~\ref{fig:fairness-vs-qor} presents the resulting unfairness ($y$-axis) by applying the Greedy NF-RS (continuous lines) and the Multi-step NF-RS (dashed lines) in scenarios with different QoR constraints $q$ ($x$-axis). It is clearly seen that as the QoR constraint becomes looser (lower values in $x$-axis) the unfairness of the system increases (higher values in $y$-axis). This is due to the fact that by relaxing the required QoR, the NF-RS has more flexibility in changing the recommendation lists, and consequently this nudges the content demand $\mathbf{p^{NF}}$ farther from $\mathbf{p^{BS}}$. However, it is interesting to note the change in unfairness is not linear to the QoR, but is rather a concave function that decreases more steeply for higher values of QoR. In fact, a key observation in Fig.~\ref{fig:fairness-vs-qor} is that even a small decrease in QoR from the maximum value that corresponds to the BS-RS ($q=1$), can already lead to significant unfairness (e.g., see the $F$ values for $q=0.9$). This finding (similar behavior holds in all the scenarios we tested) highlights the following insight, which is a main motivation for this paper:
\myquotation{The QoR constraint commonly used in NF-RS to satisfy the users, may not suffice to (implicitly) impose fairness for the content provider as well. To account for fairness, one needs to explicitly take it into account when designing a NF-RS.}
%

\begin{table}[b]
\centering
\caption{Relation between system parameters, fairness $F$ and network gain $G$: monotonicity and correlation ($\rho$).}
\label{tab:fairness-vs-param}
\begin{tabular}{c|ccc|c}
{} & $F_{max}$ & $F_{tv}$ & $F_{kl}$ & $G$\\
\hline
$q \nearrow $ & $\searrow$ ($\rho$=-0.40) & $\searrow$ ($\rho$=-0.38) & $\searrow$ ($\rho$=-0.30)  & $\searrow$ ($\rho$=-0.42) \\
\hline
$\alpha \nearrow $ & $\nearrow$ ($\rho$=0.46) & $\nearrow$ ($\rho$=0.81) & $\nearrow$ ($\rho$=0.75) & $\nearrow$ ($\rho$=0.69)  \\
\hline
$N \nearrow $ & $\searrow$ ($\rho$=-0.47) & $\searrow$ ($\rho$=-0.13) & $\searrow$ ($\rho$=-0.16) & $\searrow$ ($\rho$=-0.20)\\
\hline
$C \nearrow $ & $\searrow$ ($\rho$=-0.13) & $-$ ($\rho$=0.06) &$-$ ($\rho$=0.03) & $\nearrow$ ($\rho$=0.18) 
\end{tabular}
\end{table}

\myitem{The role of the NF-RS algorithm and the system parameters.} Comparing the curves of the two NF-RS algorithms in Fig.~\ref{fig:fairness-vs-qor}, we see that the unfairness introduced by the Multi-step NF-RS is higher than the Greedy NF-RS, under any fairness metric $F$. This finding holds in all (for $F_{tv}$, $F_{kl}$) and in 95\% (for $F_{max}$) of the scenarios we tested, and is due to the fact that the Multi-step NF-RS, by accounting the long term behavior, can shape in a larger degree than the Greedy NF-RS (or other heuristics) the content demand under the same QoR constraint. Hence, on the one hand the Multi-step NF-RS achieves \emph{higher} network gains, but on the other hand it leads to \emph{less} fairness  (e.g., 10\% higher CHR and 45\% higher $F_{kl}$ than Greedy among all scenarios). 

We observed the same relation between fairness and network gains, when varying the other system parameters as well; see Table~\ref{tab:fairness-vs-param}. Specifically, increasing the $\alpha$ means that the users choices are affected more by the RS, and the same happens for small $N$ since there are less choices (recommendations); this makes the shaping of the demand caused by a NF-RS more intense, and leads to higher network gains, and as we present in Table~\ref{tab:fairness-vs-param}, less fairness as well. \hmmm{The cache size $C$ does not significantly affect the fairness, but it also had a small effect on the network gains in the scenarios we tested.}

The above observations raise the following question, on which we focus in the next section:
\myquotation{Does great network gain come with great unfairness?}
%

\section{The Fairness vs. Network gains Trade-off}\label{sec:bounds}

In this section we proceed to study the trade-off between the network gains that can be achieved by a NF-RS algorithm and the unfairness it creates. We first analyze the simulation results to verify that such a trade-off exists, and then study it analytically and derive analytic bounds (closed form expressions) for the minimum possible unfairness as a function of the network gains under any NFR scheme. 

Let us first formally define the \textit{network gain} $G$ as the increase in the cache hit rate (CHR) achieved by a NF-RS:
\begin{equation}\label{eq:gain-definition}
G = CHR^{NF} - CHR^{BS} = \textstyle 
\sum_{i\in\mathcal{C}} (p_{i}^{NF}- p_{i}^{BS})     
\end{equation}
where $\mathcal{C}\subset\mathcal{K}$ is the set of cached contents. In other words, the network gain is the extra content demand that can be served by the cache when applying a NF-RS.

In Fig.~\ref{fig:fairness-vs-gains} we present scatter plots, where each marker corresponds to a simulation scenario and its ($x,y$)-coordinates correspond to the resulting fairness metric $F$ and network gain $G$ values, respectively. The results verify our previous observations: as the achieved network gain increases, the unfairness of the system increases as well. This positive correlation holds for all fairness metrics. However, the exact behavior differs among the different $F$ metrics (note that all subplots of Fig.~\ref{fig:fairness-vs-gains} present the same simulation scenarios, i.e., with the same network gains); for instance, $F_{max}$ sees a lower increase and with values up to 0.3, while $F_{tv}$ has a larger increase with values up to 1.

In the following theorem we analytically study the observed behavior, and derive theoretical bounds for the 
trade-off. 
\begin{theorem}\label{thm:bounds}
Under any NF-RS and any system parameters, for the fairness $F$ vs. network gain $G$ trade-off it holds that
\begin{align}
F_{max} &\geq \frac{1}{C}\cdot G\\
F_{tv}  &\geq G\\
F_{kl}  &\geq -H \cdot \log(1+\frac{G}{H}) - (1-H) \cdot \log(1-\frac{G}{1-H})
\end{align}
where $C=|\mathcal{C}|$ is the number of cached contents, and $H=CHR^{BS} = \sum_{i\in \mathcal{C}}p_{i}^{BS}$.
\end{theorem}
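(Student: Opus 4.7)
All three bounds can be derived from a single structural fact: the network gain $G$ is a signed mass transfer from the non-cached set $\mathcal{K}\setminus\mathcal{C}$ into the cached set $\mathcal{C}$. Specifically, since both $\mathbf{p}^{BS}$ and $\mathbf{p}^{NF}$ sum to $1$, we have
\begin{equation*}
\sum_{i\in\mathcal{C}}\bigl(p_i^{NF}-p_i^{BS}\bigr) \;=\; G \;=\; -\sum_{i\in\mathcal{K}\setminus\mathcal{C}}\bigl(p_i^{NF}-p_i^{BS}\bigr).
\end{equation*}
The plan is to use this balance equation together with elementary inequalities, tailored to each of the three fairness measures. I would treat $G\geq 0$ (the interesting case) throughout; the degenerate case $G=0$ is trivial.

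\textbf{Bound on $F_{max}$.} The plan is a one-line triangle inequality on the cached set: from the identity above,
\begin{equation*}
G \;=\; \sum_{i\in\mathcal{C}}\bigl(p_i^{NF}-p_i^{BS}\bigr) \;\leq\; \sum_{i\in\mathcal{C}}\bigl|p_i^{NF}-p_i^{BS}\bigr| \;\leq\; C\cdot F_{max},
\end{equation*}
which rearranges to the claimed $F_{max}\geq G/C$.

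\textbf{Bound on $F_{tv}$.} I would split the sum defining $F_{tv}$ into contributions from $\mathcal{C}$ and from $\mathcal{K}\setminus\mathcal{C}$, and on each piece bound the sum of absolute values by the absolute value of the sum:
\begin{equation*}
\sum_{i\in\mathcal{C}}\bigl|p_i^{NF}-p_i^{BS}\bigr| \;\geq\; \Bigl|\sum_{i\in\mathcal{C}}(p_i^{NF}-p_i^{BS})\Bigr| \;=\; G,
\end{equation*}
and identically, by the balance equation, $\sum_{i\in\mathcal{K}\setminus\mathcal{C}}|p_i^{NF}-p_i^{BS}|\geq G$. Adding the two pieces and multiplying by $\tfrac12$ gives $F_{tv}\geq G$.

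\textbf{Bound on $F_{kl}$.} This is the only step with any real content, and I expect it to be the main obstacle. The trick is to collapse $\mathbf{p}^{BS}$ and $\mathbf{p}^{NF}$ onto the binary partition $\{\mathcal{C},\mathcal{K}\setminus\mathcal{C}\}$ and apply the log-sum inequality (equivalently, the data-processing inequality for KL divergence under coarsening):
\begin{equation*}
F_{kl} \;\geq\; H\log\frac{H}{H+G} + (1-H)\log\frac{1-H}{1-H-G},
\end{equation*}
since the coarsened measures assign mass $H$ vs.\ $H+G$ to $\mathcal{C}$ and $1-H$ vs.\ $1-H-G$ to its complement. Rewriting each logarithm as $\log\frac{H}{H+G}=-\log(1+G/H)$ and $\log\frac{1-H}{1-H-G}=-\log(1-G/(1-H))$ yields exactly the stated closed form. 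I would briefly remark on well-definedness: $H+G=CHR^{NF}\in[0,1]$ guarantees $1-G/(1-H)\geq 0$, so the logarithm is defined (with the usual convention that the bound equals $+\infty$ when a mass vanishes, in which case $F_{kl}$ is trivially at least as large).
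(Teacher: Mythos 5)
Your proof is correct. The $F_{max}$ and $F_{tv}$ bounds follow essentially the same route as the paper: a triangle inequality over $\mathcal{C}$ for the former, and for the latter the split of $2F_{tv}$ into cached and non-cached contributions combined with the mass-balance identity $\sum_{i\in\mathcal{C}}(p_i^{NF}-p_i^{BS})=-\sum_{i\in\mathcal{K}\setminus\mathcal{C}}(p_i^{NF}-p_i^{BS})$, which is exactly the paper's \eq{eq:derivation-bound-avg-step-2}.

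For $F_{kl}$, however, you take a genuinely different and arguably stronger path. The paper sets up the constrained problem $\max G$ s.t.\ $F_{kl}\le c_F$, forms the Lagrangian, solves the stationarity conditions to find that the optimal $\mathbf{p^{NF}}$ rescales $\mathbf{p^{BS}}$ by $1+G/H$ on $\mathcal{C}$ and by $1-G/(1-H)$ on $\mathcal{K}\setminus\mathcal{C}$, and then evaluates $F_{kl}$ at that point, asserting that any other feasible $\mathbf{p^{NF}}$ yields lower gain. You instead coarsen both distributions onto the binary partition $\{\mathcal{C},\mathcal{K}\setminus\mathcal{C}\}$ and invoke the log-sum inequality (data-processing for KL divergence), which yields $F_{kl}\ge H\log\frac{H}{H+G}+(1-H)\log\frac{1-H}{1-H-G}$ directly for \emph{every} admissible $\mathbf{p^{NF}}$, with no appeal to first-order optimality or to the claim that the stationary point is a global maximizer. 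Your argument is shorter, avoids the paper's implicit (unverified) assumption that the KKT point solves the non-convex-looking program, and makes transparent that equality holds exactly when $\mathbf{p^{NF}}$ is proportional to $\mathbf{p^{BS}}$ on each of the two blocks --- which is precisely the structure the paper's Lagrangian solution recovers. The only cosmetic remark: your restriction to $G\ge 0$ is unnecessary, since all three inequalities hold trivially or by the same arguments when $G<0$; you already note the degenerate-mass convention for the logarithms, which handles well-definedness.
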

\begin{proof}
The proof is given in the Appendix.
\end{proof}


The expressions in Theorem~\ref{thm:bounds} state that the maximum network gain that can be achieved by any NF-RS (i) cannot be larger than the desired $F_{tv}$ value, and (ii) increases with the cache size $C$ in the $F_{max}$ case. This indicates that larger caches can allow network gains without compensating in \textit{individual fairness} ($F_{max}$), while this is not the case in \textit{aggregate fairness} ($F_{tv}$). In the case of $F_{kl}$, the bound is given by a non-linear function (convex in $G$), which depends on the cache size and the distribution of the demand under the BS-RS (captured by the parameter $H=CHR^{BS}$).

Comparing the results in Fig.~\ref{fig:fairness-vs-gains} with the bounds, we can see that in some scenarios the achieved network gains are close to (or, coincide with) the bound\hmmm{, i.e., \textit{the bounds are tight}}\footnote{Note that some of the scenarios/markers in Fig.~\ref{fig:fairness-vs-gain-kl} ($F_{kl}$ case) correspond to different values $CHR^{BS}$ (e.g., due to different $\mathbf{p^{BS}}$ distributions), and thus different bounds. In favor of readability, we avoid depicting several bounds and show only the worst-case bound among those scenarios; i.e., for some scenarios/markers the bound is tighter than the depicted bound.}.

\hmmm{However, in the majority of scenarios,} \textit{the operating point of the considered NF-RS algorithms is far from the bound}\hmmm{. For instance, in 
Fig~\ref{fig:cdf-distance-from-bound} that gives the CDF of the distance (along the $x$-axis) between the operating points and the bound, we can see that in half of the cases (i.e., for 0.5 in the $y$-axis) the resulting unfairness ($x$-axis) is at least $50\%$ larger than the value of the bound for $F_{max}$ and $F_{kl}$ and $20\%$ larger for $F_{tv}$}. The fact that NF-RS algorithms do not operate on the bound can be due to (i) the system parameters (e.g., the QoR constraint) that restrict an algorithm from shaping arbitrarily the content demand, and in this case the bound may not be achievable, or (ii) the NF-RS algorithms themselves, which were designed to optimize the network gain without taking the fairness into account. \hmmm{Thus, a question that follows naturally~is:}
%
\myquotation{Can a NF-RS be designed to operate closer to the bound, and achieve the optimal fairness vs. network gain trade-off?}
In Section~\ref{sec:design} we address the above question and design an optimal NF-RS algorithm that achieves the maximum network gain under a fairness constraint, and in Section~\ref{sec:price} we study how the introduced constraint affects the network gains.

\section{Optimal Fair NFR}\label{sec:design}



In this section we formulate the problem of designing the optimal NF-RS that takes fairness into account. We first model and describe the problem, and then prove that it can be expressed as a linear program (LP) whose solution is the \textit{optimal fair NF-RS}.


\myitem{Objective.} The objective in NFR is to maximize the network gains (or, equivalently minimize the network cost), which in our framework is captured by the  CHR, i.e., $\sum_{i\in\mathcal{C}} p_i^{NF}$.

\myitem{Decision variables.} An NF-RS algorithm selects which contents to recommend, i.e., the recommendation lists $R_{i}$\footnote{There are NF-RS algorithms that select also the network policy, e.g., caching~\cite{sermpezis2018soft,zhu2018coded,chatzieleftheriou2019jointly,costantini2019approximation}. While our framework can be generalized in this direction, this is out of the scope of this paper (see also discussion in Section~\ref{sec:conclusion}).}. We model the recommendation decisions with a set of variables $r_{ij}$, which denote the probability (or frequency) that a content $j$ appears in the recommendation list of $i$, i.e., $r_{ij}=\textnormal{Prob}\{j\in R_{i}\}$. We denote as $\mathbf{R}$ the $K\times K$ matrix that contains all the variables $r_{ij}$ $\forall i,j\in\mathcal{K}$. We follow a probabilistic approach, i.e., $r_{ij}\in[0,1]$ (instead of the deterministic $r_{ij}\in\{0,1\}$), to capture variations of recommendations among different users, or even for the same user (e.g., to not show always the same recommendations for a given content).

\myitem{Constraints.} First, we require that the decision variables are probabilities ($r_{ij}\in[0,1]$) and the recommendation lists contain $N$ recommendations ($\sum_{j\in\mathcal{K}}r_{ij}$=$N$)~\cite{giannakas2020soba,giannakas2019order}. Second, we use a threshold $q\in[0,1]$ for the QoR constraint similarly to previous works, i.e., $\sum_{j\in\mathcal{K}} r_{ij} \cdot u_{ij} \ge q\cdot q^{BS}_{i}$, where $q^{BS}_{i}$ the maximum QoR achieved by the BS-RS. Finally, we introduce the fairness constraint, by using a threshold $c_{f}$ for the maximum allowed unfairness, i.e., $F(\mathbf{p^{BS}}, \mathbf{p^{NF}})\leq c_{f}$, where $F$ is any of the metrics $F_{max}$, $F_{tv}$, or $F_{kl}$. Both thresholds $q$ and $c_{f}$, and the fairness metric $F$, can be selected by the content provider according to its operational requirements. 

In the following theorem we express the above problem as a LP. To do this, we need to introduce a set of auxiliary variables and transform the non-linear expressions in the objective and constraints; the remainder of this section gives the proof, which includes all the needed details.

\begin{theorem}\label{theorem:problem-LP}
The optimal fair NF-RS is given by the solution of the following linear optimization problem: 
\begin{small}
\begin{subequations}\label{problem:NFR-LP}
\begin{align}
\underset{\mathbf{z,p^{NF},W}}{\textnormal{maximize}}~~~& \sum_{i\in\mathcal{C}} p_i^{NF}\label{eq:obj-LP}\\
\textnormal{subject to}~~~& p_j^{NF} - \frac{\alpha}{N} \cdot\sum_{i\in\mathcal{K}} w_{ij} = p^{d}(j), ~\hspace{0.1cm}~\forall j \in \mathcal{K} \label{eq:stationarity-con-LP}\\
& \sum_{j\in\mathcal{K}} w_{ij} \cdot u_{ij} - p_i^{NF} \cdot q\cdot q_i^{BS} \geq 0, ~\hspace{0.1cm}~\forall~i~\in\mathcal{K},
\label{eq:quality-con-LP}\\
& \sum_{j\in\mathcal{K}} w_{ij} - N\cdot p_i^{NF} = 0,~w_{ii} = 0, ~\hspace{0.1cm}~\forall~i\in\mathcal{K} \label{eq:budget-self-con-LP}\\
& w_{ij} - p_i^{NF} \le 0,~w_{ij} \ge 0, ~\hspace{0.1cm}~\forall~i,j\in\mathcal{K} \label{eq:prob-con-lp}\\
& \mathbf{S}(\mathbf{z},\mathbf{p^{NF}})
\label{eq:fair-con}
\end{align}
\end{subequations}%
\end{small}%
%
where $\mathbf{z}\in \mathbb{R}^{K}$
, $\mathbf{W}\in \mathbb{R}^{K\times K}$
, and $\mathbf{S}(\mathbf{z},\mathbf{p^{NF}})$ a set of linear constraints given in Table~\ref{tab:fairness-constraints} for each fairness metric.
\end{theorem}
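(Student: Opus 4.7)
The plan is to begin from the natural nonlinear formulation of the problem and then apply two successive linearizations: one on the stationarity equation through a lifted joint-probability variable, and one on the fairness functional via standard epigraph tricks. There are two independent sources of nonlinearity to deal with: (i) the demand $p_j^{NF}$ is a fixed point of a Markov-style equation that couples $p_i^{NF}$ with the decision variables $r_{ij}$ through the bilinear term $p_i^{NF}\cdot r_{ij}$, and (ii) the fairness metrics $F_{max},F_{tv},F_{kl}$ contain either absolute values or a logarithm. The proof splits accordingly.

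First I would write the original problem: maximize $\sum_{i\in\mathcal{C}} p_i^{NF}$ subject to $r_{ij}\in[0,1]$, $r_{ii}=0$, $\sum_j r_{ij}=N$, the QoR constraint $\sum_j r_{ij}u_{ij}\ge q\cdot q_i^{BS}$, the fairness constraint $F(\mathbf{p^{BS}},\mathbf{p^{NF}})\le c_f$, and the stationarity equation $p_j^{NF}=p^{(d)}(j)+\tfrac{\alpha}{N}\sum_i p_i^{NF}r_{ij}$ arising from the user model (direct request with probability $1-\alpha$, otherwise uniformly pick one of the $N$ items in $R_i$). The bilinear product $p_i^{NF}r_{ij}$ is the obstacle. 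I would then introduce auxiliary variables $w_{ij}:=p_i^{NF}r_{ij}$ and collect them in $\mathbf{W}$. Substituting yields the linear stationarity equation (\ref{eq:stationarity-con-LP}); multiplying the QoR constraint by $p_i^{NF}$ gives (\ref{eq:quality-con-LP}); multiplying the list-size constraint by $p_i^{NF}$ gives (\ref{eq:budget-self-con-LP}) together with $w_{ii}=0$; and $r_{ij}\in[0,1]$ becomes $0\le w_{ij}\le p_i^{NF}$, i.e., (\ref{eq:prob-con-lp}). I would then verify that the lift is without loss: given any feasible $(\mathbf{W},\mathbf{p^{NF}})$ of the LP with $p_i^{NF}>0$, one recovers $r_{ij}=w_{ij}/p_i^{NF}\in[0,1]$ that satisfies all the original constraints and achieves the same objective, and conversely any feasible $(\mathbf{r},\mathbf{p^{NF}})$ of the original problem produces a feasible LP point via $w_{ij}=p_i^{NF}r_{ij}$ (the edge case $p_i^{NF}=0$ is handled trivially since such states are unreachable and $w_{ij}=0$ suffices).

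Second, I would handle the fairness constraint $\mathbf{S}(\mathbf{z},\mathbf{p^{NF}})$ metric by metric. For $F_{max}\le c_f$ the constraint is already linear: add $-c_f\le p_i^{NF}-p_i^{BS}\le c_f$ for every $i$ (no $\mathbf{z}$ variables are strictly needed, though one can equivalently write $z_i\ge p_i^{NF}-p_i^{BS}$, $z_i\ge p_i^{BS}-p_i^{NF}$, $z_i\le c_f$). For $F_{tv}\le c_f$ I would use the standard epigraph reformulation of an $\ell_1$ norm: introduce $z_i\ge 0$ with $z_i\ge p_i^{NF}-p_i^{BS}$ and $z_i\ge p_i^{BS}-p_i^{NF}$, then impose $\tfrac12\sum_i z_i\le c_f$. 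For $F_{kl}\le c_f$, the constraint is convex but not polyhedral, so an exact LP representation is impossible; I would therefore use a piecewise linear outer (or inner) approximation of the one-dimensional convex map $x\mapsto p_i^{BS}\log\bigl(p_i^{BS}/((1-w)x+wp_i^{BS})\bigr)$, encoding each breakpoint as a linear lower bound on $z_i$ (one $z_i$ per content), and requiring $\sum_i z_i\le c_f\cdot\log(1/w)$. This preserves convexity and is arbitrarily tight, which is the standard way to obtain a LP surrogate of a KL constraint.

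The main obstacle is the third fairness case: defending the claim that $F_{kl}$ admits a \emph{linear} (not merely convex) encoding requires committing to a piecewise linearization and is where the proof leaves exactness behind. Everything else is essentially bookkeeping — the $w$-lift is the standard Markov-chain-to-LP trick, and the $\ell_\infty/\ell_1$ reformulations are textbook. I would close the proof by observing that the LP is always feasible (the BS-RS lifted to $w_{ij}^{BS}=p_i^{BS}r_{ij}^{BS}$ witnesses feasibility whenever $c_f\ge F(\mathbf{p^{BS}},\mathbf{p^{BS}})=0$ and $q\le 1$), so the LP indeed computes the optimal Fair-NF-RS.
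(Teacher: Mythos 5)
Your proposal is correct and follows essentially the same route as the paper: the lift $w_{ij}=p_i^{NF}\cdot r_{ij}$ (justified by the positivity of $\mathbf{p^{NF}}$) to linearize the bilinear stationarity and QoR constraints, the standard $\ell_\infty/\ell_1$ epigraph reformulations for $F_{max}$ and $F_{tv}$, and tangent-line cuts on the logarithm for $F_{kl}$. Your explicit caveat that the KL constraint only admits a piecewise-linear \emph{approximation} (not an exact LP encoding) is an honest reading of what the paper's proof actually does with its $M$ sampled tangents.
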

\begin{proof}
\begin{table}
\centering
\caption{Set of linear fairness constraints $\mathbf{S}(\mathbf{z},\mathbf{p^{NF}})$.}
\label{tab:fairness-constraints}
\begin{tabular}{ll}
\hline
\hspace{-0.2cm}%
$F_{max}$: &  
\hspace{-0.5cm}%
\begin{tabular}{l}
$p_{i}^{BS} - p_{i}^{NF} \le c_{f}$\\
$p_{i}^{NF} - p_{i}^{BS} \le c_{f}$
\end{tabular}
$~\hspace{0.1cm}~\forall~i\in\mathcal{K}$\\
\hline
\hspace{-0.2cm}%
$F_{tv}$:  & 
\hspace{-0.7cm}%
\begin{tabular}{l}
$\sum_{i\in\mathcal{K}} z_i \le c_{f}$\\
$p_{i}^{BS} - p_{i}^{NF} \le z_i~\hspace{0.1cm}~\forall~i\in\mathcal{K}$\\
$p_{i}^{NF} - p_{i}^{BS} \le z_i~\hspace{0.1cm}~\forall~i\in\mathcal{K}$ 
\end{tabular}\\
\hline
\hspace{-0.2cm}%
$F_{kl}$:  &
\hspace{-0.7cm}%
\begin{tabular}{l}
$\sum_{i\in\mathcal{K}} p^{BS}_i \cdot z_i \ge - \left(c_{f} - \sum_{i\in\mathcal{K}} p^{BS}_i \log(p^{BS}_i)
\right)$\\
$z_i \le e^{(m-1)\cdot s} \cdot p_i^{NF} - (m-1) s - 1,~\hspace{0.1cm} \forall i\in\mathcal{K},m\in\{1,...,M\}$
\end{tabular}%
\\
\hline
\end{tabular}
\end{table}
The objective (and the fairness constraint) involves terms of the content demand $\mathbf{p^{NF}}$, which depends on the recommendations $\mathbf{R}$. In the considered framework
, and similarly to previous works (e.g.,~\cite{chatzieleftheriou2019jointly,giannakas-wowmom-2018,giannakas2019order,zhu2018coded,lin2019content}), the content demand can be modeled with a Markov Chain, with transition probabilities that depend on the recommendations $\mathbf{R}$, the direct requests $\mathbf{p^{(d)}}$ and the probability $\alpha$. Hence, using the result of~\cite{langville2004deeper} (the detailed proof is omitted due to space limitations), we prove the following lemma.
%
\begin{mylemma}\label{thm:total-demand} The content demand $\mathbf{p^{NF}}$ is given by 
\begin{equation}\label{eq:lemma-total-demand}
\mathbf{p^{NF}} = (1-\alpha) \cdot \mathbf{p^{(d)}} \cdot \left(\mathbf{I}-\frac{\alpha}{N} \cdot \mathbf{R} \right)^{-1}
\end{equation}
for $\alpha \in (0,1)$ and $\mathbf{p^{(d)}}>0$;
$\mathbf{I}$ is the $K\times K$ identity matrix.
\end{mylemma}


Lemma~\ref{thm:total-demand} gives $\mathbf{p}^{NF}$ as a function of an inverse matrix of $\mathbf{R}$, which in general is non-convex on the variable $\mathbf{R}$. To overcome this non-linearity, we explicitly introduce $\mathbf{p^{NF}}$ as an auxiliary optimization variable. The only constraint we need for the variable $\mathbf{p^{NF}}$ is \eq{eq:lemma-total-demand}. To express this constraint as a linear equation, we first multiply both sides with the term $\left(\mathbf{I}-\frac{\alpha}{N} \cdot \mathbf{R} \right)$ and write:
\begin{equation}\label{eq:stationarity}
    \mathbf{p^{NF}} - \frac{\alpha}{N}\cdot \mathbf{p^{NF}} \cdot\mathbf{R} = (1-\alpha) \cdot \mathbf{p^{(d)}}
\end{equation}
Since \eq{eq:stationarity} involves products of the variables $\mathbf{p^{NF}} \cdot\mathbf{R}$ (i.e., a quadratic term), we substitute the optimization variables $r_{ij}$ with the new auxiliary variables $w_{ij}$, where $r_{ij}=\frac{w_{ij}}{p_{i}^{NF}}$. This substitution is possible because $p_{i}^{NF}>0$ for the cases of interest, as stated in the following corollary (whose proof follows by observing \eq{eq:stationarity}) .
\begin{corollary}\label{thm:p-is-positive}
$p_{i}^{NF} > 0$, $\forall i\in\mathcal{K}$, for $\alpha \in (0,1)$ and $\mathbf{p^{(d)}}>0$.
\end{corollary}

Having introduced the new auxiliary variables, it is easy to show how the constraints of \eq{problem:NFR-LP} are derived, by substituting $r_{ij} = \frac{w_{ij}}{p_i^{NF}}$, as follows:
%
\begin{subequations}
\begin{align}
& \textnormal{Eq. (\ref{eq:stationarity-con-LP})} \Leftrightarrow p_{j}^{NF} - \frac{\alpha}{N}\cdot\sum_{i\in\mathcal{K}} p_i^{NF}\cdot r_{ij} = (1-\alpha)\cdot p^{d}_{i}\label{eq:stationarity-con-nfr}\\
& \textnormal{Eq. (\ref{eq:quality-con-LP})} \Leftrightarrow  \sum_{j\in\mathcal{K}} r_{ij} \cdot u_{ij} \ge q\cdot q^{BS}_{i}
\label{eq:quality-con-nfr}\\
& \textnormal{Eq. (\ref{eq:budget-self-con-LP})} \Leftrightarrow \sum_{j\in\mathcal{K}} r_{ij} = N,~r_{ii} = 0
\label{eq:budget-self-con-nfr}\\
& \textnormal{Eq. (\ref{eq:prob-con-lp})} \Leftrightarrow r_{ij} \le 1,~r_{ij} \ge 0
\label{eq:prob-con-nfr}
\end{align}
\end{subequations}
where \eq{eq:stationarity-con-nfr} is equivalent to \eq{eq:stationarity} (and guarantees that $\mathbf{p^{NF}}$ is a stationary distribution for $\mathbf{R}$), \eq{eq:quality-con-nfr} is the QoR constraint, and \eq{eq:budget-self-con-nfr} and \eq{eq:prob-con-nfr} are constraints on the recommendation variables.

Up to now, we have transformed all the constraints, apart from the fairness constraint $F(\mathbf{p^{BS}}, \mathbf{p^{NF}})\leq c_{f}$. In the following, we transform the fairness constraint in a set of linear constraints $\mathbf{S}(\mathbf{z},\mathbf{p^{NF}})$ for each fairness metric of Section~\ref{sec:fairness-definition}.

\myitem{F-max.} 
In the case of $F_{max}$ the fairness constraint is 
\begin{align}\label{eq:max-con}
    F_{max}(\mathbf{p^{NF}}, \mathbf{p^{BS}}) = \max_{i\in\mathcal{K}} \{|p_{i}^{NF}-p_{i}^{BS}|\} \le c_{f}
\end{align}
\eq{eq:max-con} is not a linear inequality. However, it can be expressed as the intersection of the following $2\cdot K$ linear inequalities
\begin{equation}\label{eq:max-linear-con}
\begin{tabular}{c}
$p_{i}^{BS} - p_{i}^{NF} \le c_{f}$\\
$p_{i}^{NF} - p_{i}^{BS} \le c_{f}$
\end{tabular}
~\hspace{0.1cm}~\forall~i\in\mathcal{K}
\end{equation}
where we first set $|p_{i}^{NF} - p_{i}^{BS}| \le c_{f}$ $\forall i\in\mathcal{K}$ as equivalent to constraining the $\max$, and then substituted each absolute term $|x|\leq c_{f}$ with two constraints $x\leq c_{f}$ and $-x\leq c_{f}$.

\myitem{F-tv.} 
A similar approach could be applied for the constraint
\begin{equation}\label{eq:tv-con}
\textstyle F_{tv}(\mathbf{p^{NF}}, \mathbf{p^{BS}}) = \frac{1}{2}\cdot \sum_{i\in\mathcal{K}}|p_{i}^{NF}-p_{i}^{BS}| \leq c_{f}   
\end{equation}
However, it would lead to $2^K$ linear inequalities, which is impractical for large catalogs. Hence, we introduce an auxiliary set of variables $\mathbf{z}\in\mathbb{R}^{K}$ (a $K$-sized vector) and substitute \eq{eq:tv-con} with the following constraints
\begin{align}\label{eq:tv-transformed}
\begin{tabular}{l}
$\sum_{i\in\mathcal{K}} z_i \le c_{f}$\\
$|p^{BS}_{i} - p_{i}^{NF}| \le z_i,  ~\hspace{0.1cm}~\forall~i\in\mathcal{K}$
\end{tabular}
\end{align}
The first constraint is a linear inequality, and the remaining $K$ inequalities of \eq{eq:tv-transformed} can be substituted with $2\cdot K$ linear inequalities similarly to \eq{eq:max-linear-con}.

\myitem{F-kl.} 
In the $F_{kl}$ case, the constraint can be written as 
\begin{align}\label{eq:kl-basis}
\textstyle  F_{kl}  
= \sum_{i\in\mathcal{K}} p^{BS}_i \cdot \left(\log(p^{BS}_i) - \log(p_i^{NF})\right) \le c_{f}
\end{align}
\eq{eq:kl-basis} involves a logarithmic function, thus, we cannot proceed as in $F_{max}$ or $F_{tv}$.
We first rewrite \eq{eq:kl-basis} as 
%
\begin{align}\label{eq:f-kl-constraint-intermediate}
\textstyle \sum_{i\in\mathcal{K}} p^{BS}_i \log(p_i^{NF}) \ge - \left(c_{f} - \sum_{i\in\mathcal{K}} p^{BS}_i \log(p^{BS}_i)
\right)
\end{align}
where we remind that $p_{i}^{NF}$ are optimization variables and $p_{i}^{BS}$ are given constants. 
Then, we introduce an auxiliary set of $K$ variables $\mathbf{z}\in\mathbb{R}^{K}$, and we demand the following $K+1$ inequalities which are equivalent to \eq{eq:f-kl-constraint-intermediate}
\begin{align}\label{eq:kl-transformed}
\begin{tabular}{l}
$\sum_{i\in\mathcal{K}} p^{BS}_i \cdot z_i \ge - \left(c_{f} - \sum_{i\in\mathcal{K}} p^{BS}_i \log(p^{BS}_i)
\right)$\\
$\log(p_i^{NF}) \ge z_i, ~\hspace{0.1cm}~\forall~i\in\mathcal{K}$
\end{tabular}
\end{align}
The first inequality of \eq{eq:kl-transformed} is linear. The remaining $K$ inequalities are nonlinear due to the presence of the logarithm. To transform them to linear constraints, we approximate the logarithm with a general family of linear cuts. Specifically, we define $M$ lines for every $i$, as $f(p_i^{NF}) = a_{m,i}\cdot p_i^{NF} + b_{m,i}$, that are tangent to the $\log(p_i^{NF})$ function in the interval $p_i^{NF}\in[0,1]$. Essentially we sample the logarithm at the points 
\[\{e^{-(m-1)\cdot s}~,~\log e^{-(m-1)\cdot s}\}\]
%
where $m = 1, \dots, M$ and $s<1$. The $M$ slopes $a_{m,i}$ and the corresponding constants $b_{m,i}$, which are the same for every dimension $i\in\mathcal{K}$, of these tangent lines can be straightforwardly calculated.
%
%
Thus, instead of using the $K$ non-linear inequalities of \eq{eq:kl-transformed}, we use the following $M\cdot K$ inequalities that are linear on the variables $z_{i}$ and $p_{i}^{NF}$
\begin{align*}
\textstyle z_i \le e^{(m-1)\cdot s} \cdot p_i^{NF} - (m-1) s - 1,~\hspace{0.1cm} \forall i\in\mathcal{K},m\in\{1,...,M\}
\end{align*}
\textit{Remark:} The sampling step $s$ and the number of linear cuts $M$, play a major role in the optimization process. $s$ that is small enough for dense sampling
, and an $M$ according to the size of the catalog. In our scenarios, we found that $s=0.05$ and $M=160$ suffices for a catalog of $K\approx 1000$ contents.
\end{proof}

\section{The Price of Fairness}\label{sec:price}
In this section, we employ the Fair NF-RS of Section~\ref{sec:design} to the simulation setup of Section~\ref{sec:sim-setup}. We consider different values for the fairness constraints $c_{f}$, and in Fig.~\ref{fig:price} we present the performance, i.e., the achieved CHR ($y$-axis) of the Fair NF-RS (continuous lines) vs. the resulting unfairness ($x$-axis). We present two indicative scenarios for the LastFM/Movielens datasets (red/blue color). Also, we present the bound for each scenario (dash lines), the operating points \textit{\{unfairness, CHR\}} of the BS-RS (star markers) and the NF-RS schemes that do not consider fairness (star, cross, and hexagonal markers for the Greedy, Multi-step, and CABaRet NF-RS, respectively). 

Below we discuss the main findings stemming from Fig.~\ref{fig:price}, which provide useful insights for the effect of imposing fairness in NFR and the price we have to pay for this.

\begin{figure*}
\centering
\subfigure[{$F_{max}$}]{\includegraphics[width=0.49\columnwidth]{./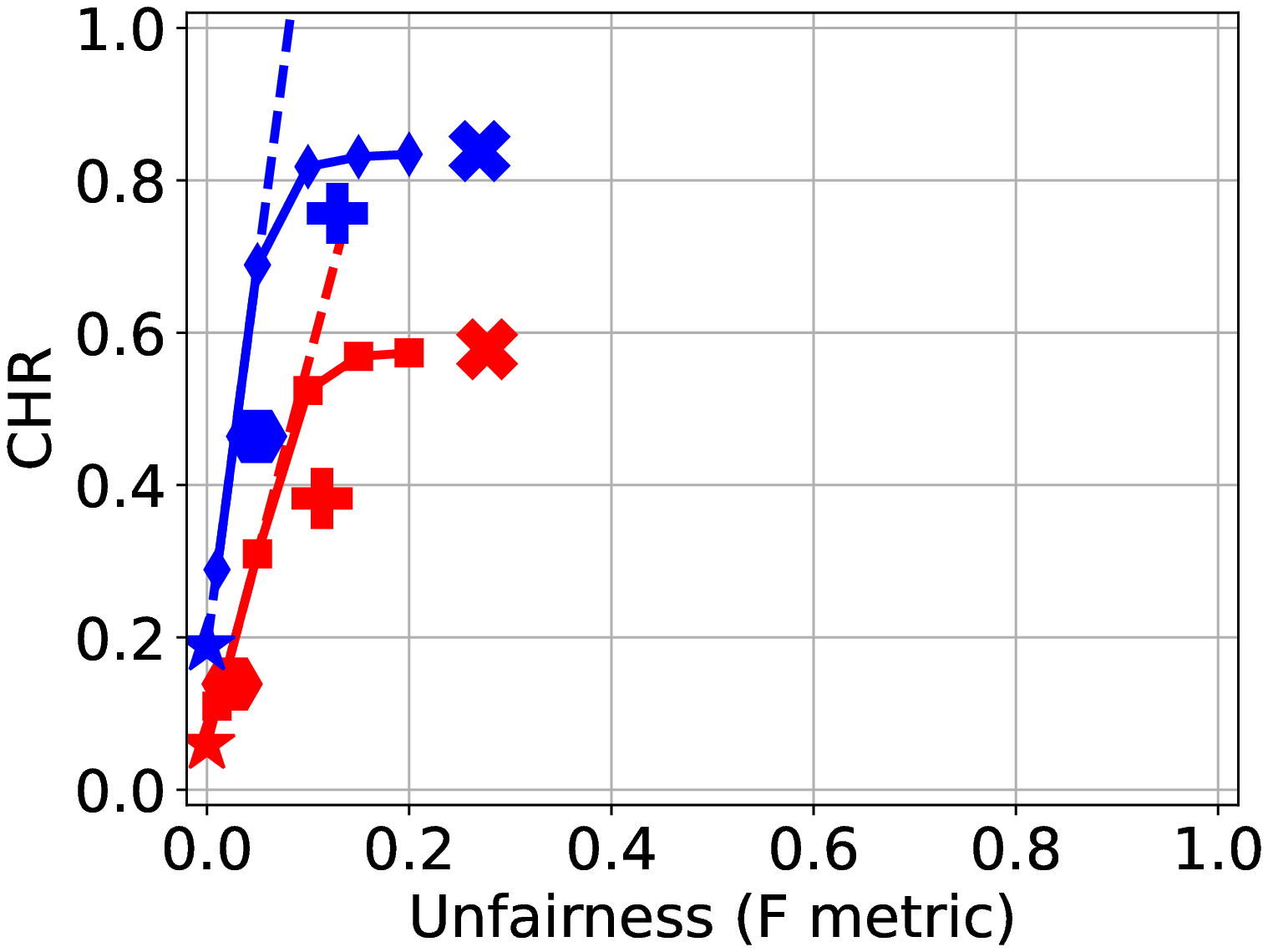}\label{fig:price-max}}
\subfigure[{$F_{tv}$}]{\includegraphics[width=0.49\columnwidth]{./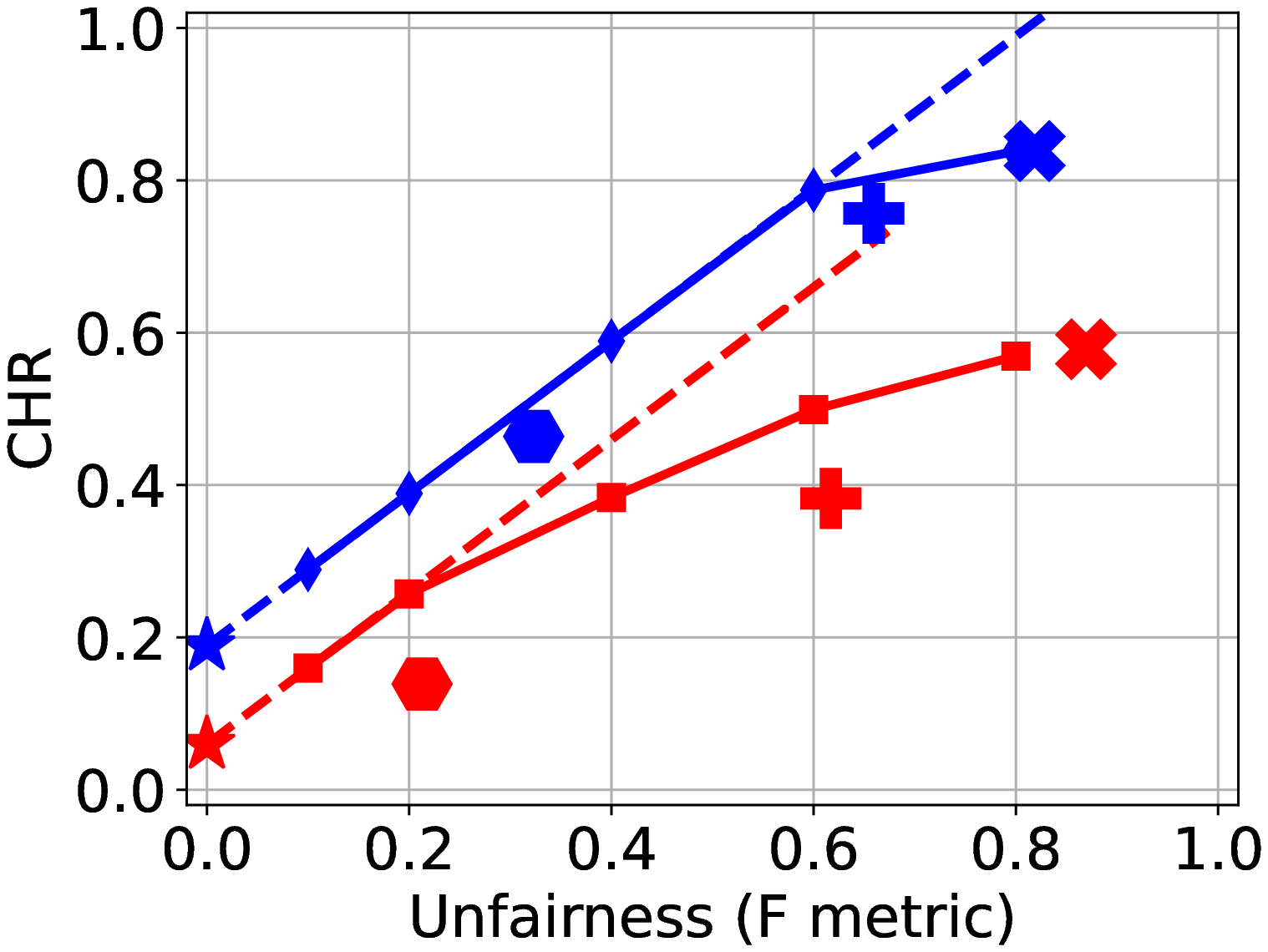}\label{fig:price-avg}}
\subfigure[{$F_{kl}$}]{\includegraphics[width=0.49\columnwidth]{./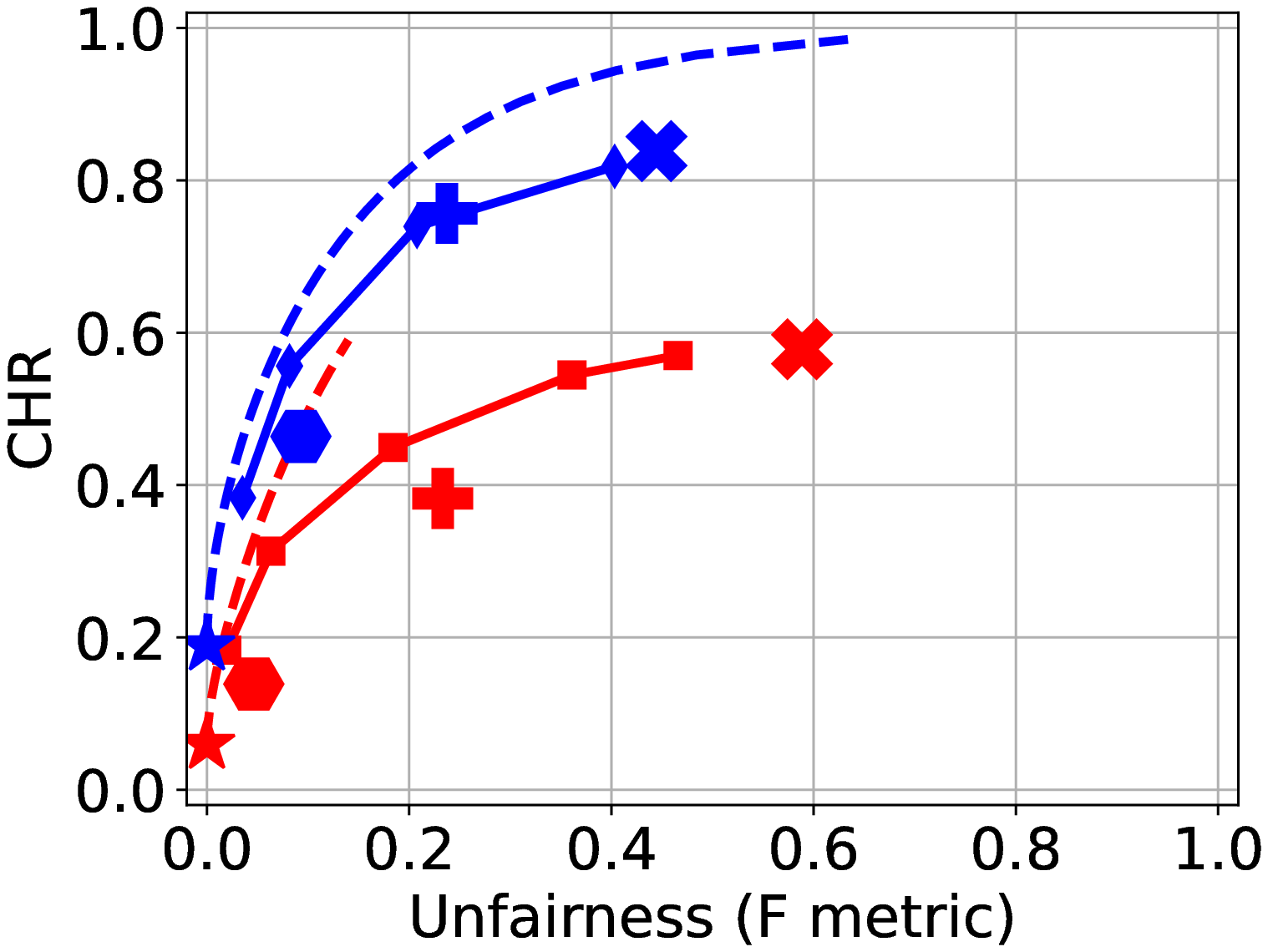}\label{fig:price-kl}}
\subfigure{\includegraphics[width=0.49\columnwidth]{./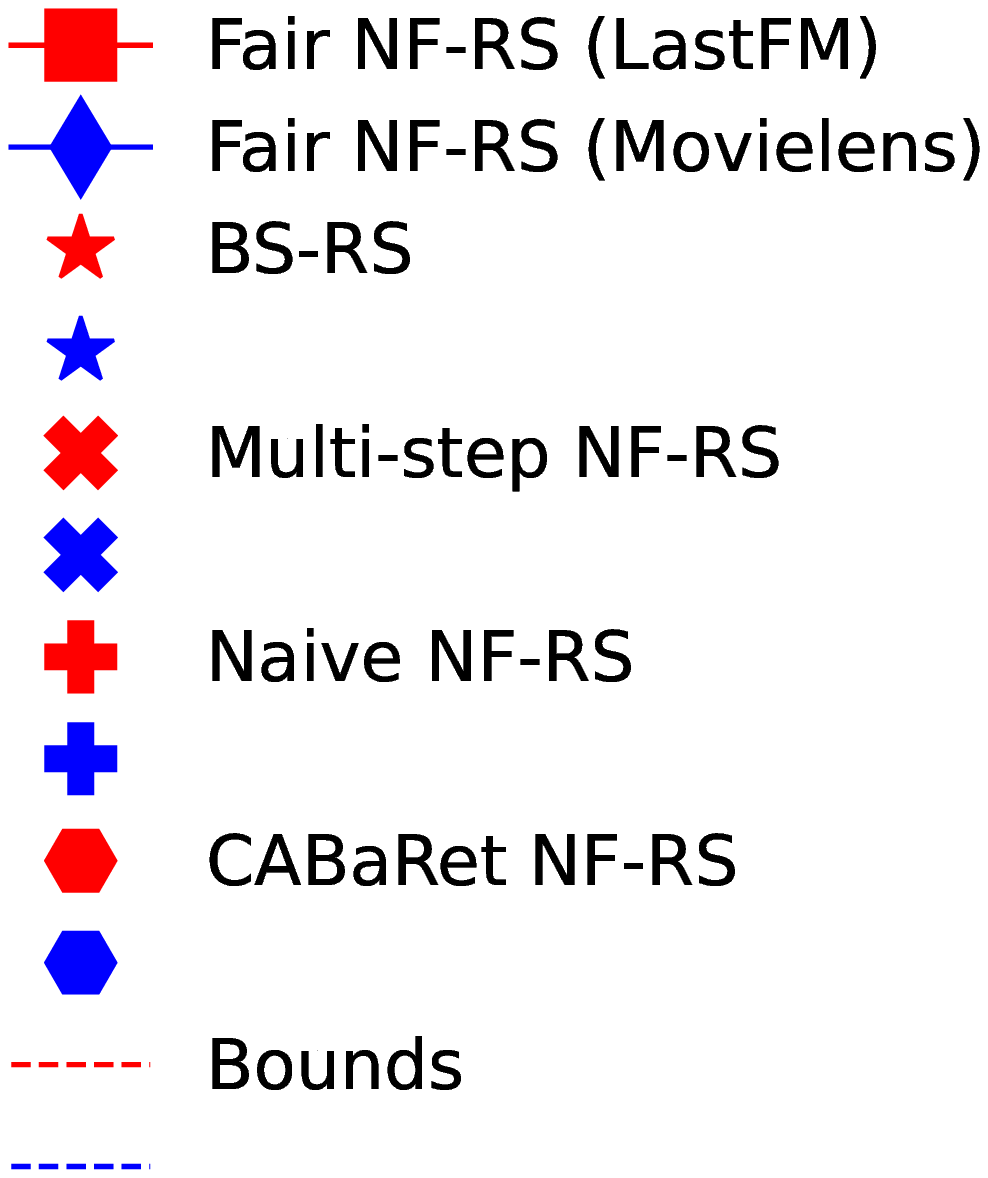}}
\caption{The price of fairness: Comparison of the performance (fairness at $x$-axis vs. CHR at $y$-axis) of the Fair NF-RS (continuous lines) with other RS (markers) and the bounds (dashed lines). Red colors correspond to the LastFM dataset scenario and blue colors to the Movielens dataset scenario, with parameters $\alpha$=0.99, $N$=2, $q$=0.9, $C$=5 (LastFM) and $C$=10 (Movielens).}
\label{fig:price}
\end{figure*}

\keyfinding{1}{The Fair NF-RS always achieves a better performance trade-off than other NF-RS algorithms}

The first observation that verifies the correctness of the proposed approach is that the Fair NF-RS performs better than other NF-RS (both in fairness and network gains), i.e., the curve of the Fair NF-RS is above (higher CHR) and/or on the left (less unfairness)  of the markers that indicate the operation points of the other NF-RS algorithms. Extending the Fair NF-RS curve towards (i) small values of $F$ ($x$-axis) leads to the operating point of the BS-RS ($F=0$), and (ii) large values of $F$ leads to the operating point of the Multi-step NF-RS, which is equivalent to the Fair NF-RS \emph{without} fairness constraint. 

\keyfinding{2}{By allowing a little unfairness, high network gains can be achieved}

Comparing the Fair NF-RS performance with this of the BS-RS, we see that the increase in the network gains is steep for a small relaxation in the unfairness (i.e., for small values in the $x$-axis). In fact, we can see that the curve of the Fair NF-RS coincides with the bound, which means that the optimal fairness-gains trade-off is achievable by the Fair NF-RS for small values of fairness constraints. \textit{This is a promising message for the practical feasibility of the NFR framework}: significant network gains are possible even when a level of fairness is required by the content provider.

\keyfinding{3}{The price (wrt. the network gain) of imposing fairness is small}

Moving our attention on the other side of the Fair NF-RS curve (i.e., for higher values of $F$), two interesting observations can be made: (i) the curve of the Fair NF-RS is concave, and (ii) the gains in CHR diminish for large values of $F$. These findings show that \textit{similar network gains} to the Multi-step NF-RS (which achieves the best performance under no fairness constraints) can be achieved \textit{with much less unfairness}. In particular in the case of $F_{max}$ that captures the individual fairness, this behavior is clearer: a CHR very close to the highest possible can be achieved by the Fair NF-RS even with 3 times lower $F_{max}$ compared to the Multi-step NF-RS .

Moreover, we can see that while the bounds are linear in the case of $F_{max}$ and $F_{tv}$, the curve of the Fair NF-RS is concave: this is a positive finding indicating that the Fair NF-RS stays close to the bound and deviates for it only when large values of unfairness are allowed. Even in the case of $F_{kl}$ where the bound is also concave, the Fair NF-RS curve approaches the highest possible CHR with a faster rate.

\section{Related Work}\label{sec:related}
\myitem{NFR.} The paradigm of network-friendly (or, network-aware) recommendations has been recently proposed and studied under different network setups and content services~\cite{sch-chants-2016,chatzieleftheriou2017caching,sermpezis2018soft,giannakas-wowmom-2018,kastanakis-cabaret-mecomm-2018,zhu2018coded,chatzieleftheriou2019jointly,costantini2019approximation,garetto2020similarity,qi2018optimizing,giannakas2019order,chatzieleftheriou2019joint,gupta2019effect,lin2018joint,song2018making,lin2019content,sermpezis2019towards,cache-centric-video-recommendation,content-recommendation-swarming}. The proposed NFR schemes aim to increase the network gains (and/or improve the quality of content delivery) by selecting recommendations~\cite{cache-centric-video-recommendation,giannakas-wowmom-2018,giannakas2019order,kastanakis-cabaret-mecomm-2018} or by jointly designing the recommendation and network policy (e.g., caching)\cite{chatzieleftheriou2017caching,sermpezis2018soft,zhu2018coded,chatzieleftheriou2019jointly,costantini2019approximation,garetto2020similarity,qi2018optimizing,chatzieleftheriou2019joint,gupta2019effect,lin2018joint,song2018making,lin2019content}. The majority of related works considers \textit{cache-friendly} recommendations in mobile networks~\cite{sch-chants-2016,chatzieleftheriou2017caching,sermpezis2018soft,giannakas-wowmom-2018,zhu2018coded,chatzieleftheriou2019jointly,costantini2019approximation}. However, the same principles apply to generic network setups~\cite{giannakas2019order}, such as coded caching~\cite{zhu2018coded}, broadcast communications~\cite{lin2018joint,song2018making,lin2019content}, user association to base stations~\cite{chatzieleftheriou2019joint}, or swarming systems~\cite{content-recommendation-swarming}. 
%
%
While some of the proposed schemes take into account the user perspective by accounting the QoR, none of them has considered the fairness in recommendations from the perspective of the content provider. In this context, our work studies the dimension of fairness, thus providing a more complete view of the NFR paradigm. The proposed Fair NF-RS retains the efficiency of previous NFR schemes for achieving high network gains, while reduces the unfairness.

\myitem{Fairness in RS.}
A variety of fairness metrics are used by the RS community~\cite{abdollahpouri2019multi,burke2017multisided,burke2018balanced,edizel2020fairecsys,patro2020incremental,pessach2020algorithmic,Sacharidis2019ACA,steck2018calibrated,yang2017measuring,liu2018personalizing} to capture different notions and needs of the content providers.
Moreover, the fairness in RS can be defined with respect to the consumer (c-fairness) or the provider (p-fairness)~\cite{abdollahpouri2019multi,burke2017multisided}. The former is typically used to design recommendation algorithms whose output is independent of sensitive user traits, e.g., race or gender~\cite{burke2017multisided,edizel2020fairecsys}. In other words, c-fairness aims to capture discrimination between users. Hence, it is orthogonal to our study, e.g., it could be considered as a part of the BS-RS
and depends on the recommendation scores $u_{ij}$ for which we consider a generic definition. The notion of p-fairness, which we use in this paper, aims to capture potential discrimination of the content provider towards different content producers/owners (or, individual contents). The proposed Fair NF-RS provides recommendations that achieve a balance between the user satisfaction (QoR), the content provider (fairness), and the network gains. A similar issue is addressed in~\cite{liu2018personalizing}, from a multi-stakeholders perspective.

\section{Conclusion}\label{sec:conclusion}
Previous works have shown that NFR can bring significant gains for the network, however, without considering the fairness, which is a key factor for content providers. This work is the first to study the dimension of fairness in NFR, and explore the trade-offs between controlling fairness and increasing network gains. 
Our results show that fairness \textit{need} and \textit{can} be taken into account in NFR, while the price (wrt. network cost) that one has to pay to impose fairness is small.

%
%

We believe that the findings of this paper can motivate further research on fairness in NFR. For example, under NFR schemes that \textit{jointly} select the recommendation and network policies, we expect a more aggressive shaping of the demand. Hence, it is of interest to investigate if, and how the introduced unfairness and the trade-offs change under such schemes. In terms of fairness notions, an extension can be towards group fairness~\cite{Sacharidis2019ACA,xiao2017fairness}, where the contents belong to classes (e.g., of the same genre or producer)~\cite{liu2018personalizing,mehrotra2018towards}, and the fairness is defined among the aggregate demand of content classes. This more relaxed fairness metric, probably allows more flexibility in the decisions of the NF-RS and, thus, higher network gains. 

\section*{Appendix: Proof of Theorem~\ref{thm:bounds}}


The bound for the network gain $G$ vs. fairness $F$ trade-off is given by the solution of the optimization problem
\begin{footnotesize}
\begin{equation}\label{eq:optimization-problem-bound}
    \textstyle\max_{\mathbf{p^{NF}}} ~~G \hspace{0.5cm} s.t. \hspace{0.2cm} F\leq c_{F}
\end{equation}%
\end{footnotesize}%
where $G$ is defined in \eq{eq:gain-definition}, $F$ is the fairness metrics (such as the $F_{max}, F_{tv}, F_{kl}$ of Section~\ref{sec:fairness-definition}), $c_{F}$ a constant, and $\mathbf{p^{NF}}$ has to be a probability distribution.

\myitem{F-max.} For the network gain it holds that
\begin{footnotesize}
\begin{align*}
G   \leq \sum_{i\in\mathcal{C}} |p_{i}^{NF}- p_{i}^{BS}| \leq C\cdot \max_{j\in\mathcal{K}}|p_{j}^{NF}- p_{j}^{BS}| = C\cdot F_{max}
\end{align*}
\end{footnotesize}%
where the first inequality follows by applying the property $x\leq |x|$ to the terms in the expression of $G$ (\eq{eq:gain-definition}), the second inequality holds because $|p_{i}^{NF}- p_{i}^{BS}|\leq \max_{j\in\mathcal{K}}|p_{j}^{NF}- p_{j}^{BS}|$, $\forall i\in \mathcal{C}$, since $\mathcal{C} \subset \mathcal{K}$, and in the last equality we simply substituted from the definition of $F_{max}$ (Section~\ref{sec:fairness-definition}). 

\myitem{F-tv.} Starting similarly to the $F_{max}$ case, we get
\begin{footnotesize}
\begin{align}
G   \leq \sum_{i\in\mathcal{C}} |p_{i}^{NF}- p_{i}^{BS}| = 2\cdot F_{tv} - \sum_{i\in\mathcal{K}\backslash\mathcal{C}} |p_{i}^{NF}- p_{i}^{BS}| \label{eq:derivation-bound-avg-step-1}
\end{align}
\end{footnotesize}%
where the equality follows from the definition of $F_{tv}$. 
The right hand side of \eq{eq:derivation-bound-avg-step-1} increases when $\sum_{i\in\mathcal{K}\backslash\mathcal{C}} |p_{i}^{NF}- p_{i}^{BS}|$ decreases. The min value of this term can be calculated as:

\vspace{-\baselineskip}
\begin{footnotesize}
\begin{align}
&\sum_{i\in\mathcal{K}} p_{i}^{BS} = \sum_{i\in\mathcal{K}} p_{i}^{NF} ~~~\Rightarrow~~~ 
\sum_{i\in\mathcal{K}\backslash\mathcal{C}} (p_{i}^{BS}-p_{i}^{NF})     =\sum_{i\in\mathcal{C}} (p_{i}^{NF}-p_{i}^{BS}) \nonumber\\& \Rightarrow
%
\sum_{i\in\mathcal{K}\backslash\mathcal{C}} |p_{i}^{BS}-p_{i}^{NF}| \geq G \label{eq:derivation-bound-avg-step-2}
\end{align}
\end{footnotesize}%
where in the first equation both sums equal to 1 (probability distributions), the second equation follows by moving all terms for $i\in\mathcal{K}\backslash\mathcal{C}$ to the left hand side, and in the third equation the left hand side follows from the property $x\leq |x|$ and the right hand side directly from the definition of $G$ (\eq{eq:gain-definition}).

Substituting \eq{eq:derivation-bound-avg-step-2} in \eq{eq:derivation-bound-avg-step-1} gives
\begin{footnotesize}
\begin{equation*}
G \leq 2\cdot F_{tv} - G \hspace{0.5cm}\Rightarrow\hspace{0.5cm} G \leq F_{tv}
\end{equation*}
\end{footnotesize}%
\myitem{F-kl.} Due to the logarithm involved in the expression of $F_{kl}$, we cannot proceed similarly to the cases of $F_{max}$ or $F_{tv}$, and we calculate the bound by solving the optimization problem of \eq{eq:optimization-problem-bound} with the method of Lagrangian multipliers. We first formulate the Lagrangian function $\mathcal{L}$ as follows\footnote{The problem \eq{eq:optimization-problem-bound} involves also the constraints $0\leq p_{i}^{NF} \leq1$, $\forall i\in\mathcal{K}$, which need to be accounted in the Lagrangian. However, if $p_{i}^{NF}$ is 0 or 1, the $F_{kl}$ diverges and thus the constraint in \eq{eq:optimization-problem-bound} is not satisfied. Hence, for any feasible solution it will hold that $0< p_{i}^{NF} <1$ and the corresponding Lagrange multipliers will be equal to zero (Karush–Kuhn–Tucker conditions).}

\vspace{-\baselineskip}
\begin{footnotesize}
\begin{align*}
\mathcal{L}
= \textstyle \sum_{i\in\mathcal{C}} (p_{i}^{NF}- p_{i}^{BS}) - \lambda \cdot (F_{kl}-c_{f}) - \mu \cdot \left(\sum_{i\in\mathcal{K}}p_{i}^{NF}-1\right)
\end{align*}
\end{footnotesize}
The derivative of $\mathcal{L}$ with respect to $p_{i}^{NF}$ is
\begin{footnotesize}
\begin{align}
\frac{\partial \mathcal{L}}{\partial p_{i}^{NF}} =\left\{
\begin{tabular}{ll}
$1+\lambda\cdot \frac{p_{i}^{BS}}{p_{i}^{NF}}-\mu$     & ,  $i\in\mathcal{C}$\\
$\lambda \cdot \frac{p_{i}^{BS}}{p_{i}^{NF}}-\mu$     & ,  $i\in\mathcal{K}\backslash\mathcal{C}$
\end{tabular}
\right.
\end{align}
\end{footnotesize}%
where we calculate $\frac{\partial F_{kl}}{\partial p_{i}^{NF}} = -\frac{p_{i}^{BS}}{p_{i}^{NF}}$ (see $F_{kl}$ definition; Sec.~\ref{sec:fairness-definition}).
Setting $\frac{\partial \mathcal{L}}{\partial p_{i}^{NF}} =0$ for the optimal solution, gives:
\begin{footnotesize}
\begin{align}\label{eq:bound-kl-pi-parametric}
p_{i}^{NF} = \left\{
\begin{tabular}{ll}
$\frac{\lambda}{\mu-1} \cdot p_{i}^{BS}$ & ,  $i\in\mathcal{C}$\\
$\frac{\lambda}{\mu} \cdot p_{i}^{BS}$ & ,  $i\in\mathcal{K}\backslash\mathcal{C}$
\end{tabular}
\right.
\end{align}
\end{footnotesize}%
To calculate the Lagrange multipliers, we use the definition of $G$ (\eq{eq:gain-definition}), substitute from \eq{eq:bound-kl-pi-parametric}, and get

\vspace{-\baselineskip}
\begin{footnotesize}
\begin{align}
G = \sum_{i\in\mathcal{C}}\frac{\lambda}{\mu-1} p_{i}^{BS}-p_{i}^{BS} \Rightarrow
\frac{\lambda}{\mu-1} = 1+\frac{G}{\displaystyle\sum_{i\in\mathcal{C}}p_{i}^{BS}} = 1+\frac{G}{H} \label{eq:bound-kl-parameter1}
\end{align}
\end{footnotesize}%
where for brevity we denoted $H = CHR^{BS}= \sum_{i\in\mathcal{C}}p_{i}^{BS}$. Then we consider the constraint $\sum_{i\in\mathcal{K}}p_{i}^{NF}=1$ and substituting from the expressions in \eq{eq:bound-kl-pi-parametric} and \eq{eq:bound-kl-parameter1} we get

\vspace{-\baselineskip}
\begin{footnotesize}
\begin{align}
& \sum_{i\in\mathcal{C}} \left(1+\frac{G}{H}\right)\cdot p_{i}^{BS} + \sum_{i\in\mathcal{K}\backslash\mathcal{C}} \frac{\lambda}{\mu}\cdot p_{i}^{BS} =1   \Rightarrow \nonumber \\
& \left(1+\frac{G}{H}\right) \cdot \sum_{i\in\mathcal{C}} p_{i}^{BS} +  \frac{\lambda}{\mu}\cdot \sum_{i\in\mathcal{K}\backslash\mathcal{C}} p_{i}^{BS} =1   \Rightarrow \nonumber \\
& \left(1+\frac{G}{H}\right) \cdot H + \frac{\lambda}{\mu}\cdot (1-H) =1 ~~~~\Rightarrow~~~~
\frac{\lambda}{\mu} = 1-\frac{G}{1-H} \label{eq:bound-kl-parameter2}
\end{align}
\end{footnotesize}%
where we used $\sum_{i\in\mathcal{K}\backslash\mathcal{C}} p_{i}^{BS} = 1 - \sum_{i\in\mathcal{C}} p_{i}^{BS} = 1-H$.

Now, substituting from \eq{eq:bound-kl-pi-parametric}, \eq{eq:bound-kl-parameter1} and \eq{eq:bound-kl-parameter2}, in the expression for the $F_{kl}$, gives

\vspace{-\baselineskip}
\begin{footnotesize}
\begin{align}
&F_{kl}= \nonumber\\
&\sum_{i\in\mathcal{C}} p_{i}^{BS}\cdot\log\left(\frac{p_{i}^{BS}}{\left(1+\frac{G}{H}\right) p_{i}^{BS} }\right)
+ \sum_{i\in\mathcal{K}\backslash\mathcal{C}} p_{i}^{BS}\cdot\log\left(\frac{p_{i}^{BS}}{\left(1-\frac{G}{1-H}\right) p_{i}^{BS} }\right) \nonumber\\
&= -\sum_{i\in\mathcal{C}} p_{i}^{BS}\cdot\log\left(1+\frac{G}{H}\right)
- \sum_{i\in\mathcal{K}\backslash\mathcal{C}} p_{i}^{BS}\cdot\log\left(1-\frac{G}{1-H}\right) \nonumber\\
&= -H\cdot\log\left(1+\frac{G}{H}\right)
- (1-H)\cdot\log\left(1-\frac{G}{1-H}\right)
\end{align}
\end{footnotesize}
The above equality holds for the optimal $\mathbf{p^{NF}}$, i.e., the maximum network gain $G$; for any other $\mathbf{p^{NF}}$ the gains will be lower, which makes the above the inequality of Theorem~\ref{thm:bounds}.



\begin{thebibliography}{10}
\providecommand{\url}[1]{#1}
\csname url@samestyle\endcsname
\providecommand{\newblock}{\relax}
\providecommand{\bibinfo}[2]{#2}
\providecommand{\BIBentrySTDinterwordspacing}{\spaceskip=0pt\relax}
\providecommand{\BIBentryALTinterwordstretchfactor}{4}
\providecommand{\BIBentryALTinterwordspacing}{\spaceskip=\fontdimen2\font plus
\BIBentryALTinterwordstretchfactor\fontdimen3\font minus
  \fontdimen4\font\relax}
\providecommand{\BIBforeignlanguage}[2]{{%
\expandafter\ifx\csname l@#1\endcsname\relax
\typeout{** WARNING: IEEEtran.bst: No hyphenation pattern has been}%
\typeout{** loaded for the language `#1'. Using the pattern for}%
\typeout{** the default language instead.}%
\else
\language=\csname l@#1\endcsname
\fi
#2}}
\providecommand{\BIBdecl}{\relax}
\BIBdecl

\bibitem{sch-chants-2016}
T.~Spyropoulos and P.~Sermpezis, ``Soft cache hits and the impact of
  alternative content recommendations on mobile edge caching,'' in \emph{Proc.
  ACM Workshop on Challenged Networks (CHANTS)}, 2016.

\bibitem{chatzieleftheriou2017caching}
L.-E. Chatzieleftheriou, M.~Karaliopoulos, and I.~Koutsopoulos, ``Caching-aware
  recommendations: Nudging user preferences towards better caching
  performance,'' in \emph{Proc. IEEE INFOCOM}, 2017.

\bibitem{sermpezis2018soft}
P.~Sermpezis, T.~Giannakas, T.~Spyropoulos, and L.~Vigneri, ``Soft cache hits:
  Improving performance through recommendation and delivery of related
  content,'' \emph{IEEE Journal Selected Areas in Communications}, 2018.

\bibitem{giannakas-wowmom-2018}
T.~Giannakas, P.~Sermpezis, and T.~Spyropoulos, ``Show me the cache: Optimizing
  cache-friendly recommendations for sequential content access,'' in
  \emph{Proc. IEEE WoWMoM}, 2018.

\bibitem{kastanakis-cabaret-mecomm-2018}
S.~Kastanakis, P.~Sermpezis, V.~Kotronis, and X.~Dimitropoulos, ``Cabaret:
  Leveraging recommendation systems for mobile edge caching,'' in \emph{Proc.
  ACM SIGCOMM workshops}, 2018.

\bibitem{kastanakis2020network}
S.~Kastanakis, P.~Sermpezis, V.~Kotronis, D.~S. Menasche, and T.~Spyropoulos,
  ``Network-aware recommendations in the wild: Methodology, realistic
  evaluations, experiments,'' \emph{IEEE Trans. on Mobile Comp.}, 2020.

\bibitem{zhu2018coded}
B.~Zhu and W.~Chen, ``Coded caching with joint content recommendation and user
  grouping,'' in \emph{Proc. IEEE GLOBECOM}, 2018.

\bibitem{chatzieleftheriou2019jointly}
L.~E. Chatzieleftheriou, M.~Karaliopoulos, and I.~Koutsopoulos, ``Jointly
  optimizing content caching and recommendations in small cell networks,''
  \emph{IEEE Trans. on Mobile Computing}, vol.~18, no.~1, 2019.

\bibitem{costantini2019approximation}
M.~Costantini, T.~Spyropoulos, T.~Giannakas, and P.~Sermpezis, ``Approximation
  guarantees for the joint optimization of caching and recommendation,'' in
  \emph{Proc. IEEE ICC}, 2020.

\bibitem{garetto2020similarity}
M.~Garetto, E.~Leonardi, and G.~Neglia, ``Similarity caching: Theory and
  algorithms,'' in \emph{Proc. IEEE INFOCOM}, 2020.

\bibitem{qi2018optimizing}
K.~Qi, B.~Chen, C.~Yang, and S.~Han, ``Optimizing caching and recommendation
  towards user satisfaction,'' in \emph{IEEE WCSP}, 2018.

\bibitem{giannakas2019order}
T.~Giannakas, T.~Spyropoulos, and P.~Sermpezis, ``The order of things:
  Position-aware network-friendly recommendations in long viewing sessions,''
  in \emph{Proc. WiOpt}, 2019.

\bibitem{chatzieleftheriou2019joint}
L.~Chatzieleftheriou, G.~Darzanos, M.~Karaliopoulos, and I.~Koutsopoulos,
  ``Joint user association, content caching and recommendations in wireless
  edge networks,'' \emph{PER}, vol.~46, no.~3, pp. 12--17, 2019.

\bibitem{gupta2019effect}
S.~Gupta and S.~Moharir, ``Effect of recommendations on serving content with
  unknown demand,'' \emph{ACM TOMPECS}, vol.~4, no.~1, p.~4, 2019.

\bibitem{lin2018joint}
Z.~Lin and W.~Chen, ``Joint pushing and recommendation for susceptible users
  with time-varying connectivity,'' in \emph{IEEE GLOBECOM}, 2018.

\bibitem{song2018making}
L.~Song and C.~Fragouli, ``Making recommendations bandwidth aware,'' \emph{IEEE
  Trans. Information Theory}, vol.~64, no.~11, 2018.

\bibitem{lin2019content}
Z.~Lin and W.~Chen, ``Content pushing over multiuser miso downlinks with
  multicast beamforming and recommendation: A cross-layer approach,''
  \emph{IEEE Trans. on Communications}, vol.~67, no.~10, 2019.

\bibitem{giannakas2020soba}
T.~Giannakas, A.~Giovanidis, and T.~Spyropoulos, ``Soba: Session optimal
  mdp-based network friendly recommendations,'' in \emph{Proc. IEEE INFOCOM},
  2021.

\bibitem{sermpezis2019towards}
P.~Sermpezis, S.~Kastanakis, J.~I. Pinheiro, F.~Assis, D.~Menasch{\'e}, and
  T.~Spyropoulos, ``Towards qos-aware recommendations,'' in \emph{ACM RecSys
  workshops (CARS workshop)}, 2020.

\bibitem{cache-centric-video-recommendation}
D.~Krishnappa, M.~Zink, C.~Griwodz, and P.~Halvorsen, ``Cache-centric video
  recommendation: an approach to improve the efficiency of youtube caches,''
  \emph{ACM Transactions on Multimedia Computing, Communications, and
  Applications (TOMM)}, vol.~11, no.~4, p.~48, 2015.

\bibitem{content-recommendation-swarming}
D.~Munaro, C.~Delgado, and D.~S. Menasch{\'e}, ``Content recommendation and
  service costs in swarming systems,'' in \emph{Proc. IEEE ICC}, 2015.

\bibitem{cisco2018}
{Cisco}, ``Visual networking index: Forecast and trends, 2017-2022,'' 2018.

\bibitem{ericsson2018}
{Ericsson}, ``Ericsson mobility report,'' 2018, white paper.

\bibitem{RecImpact-IMC10}
R.~Zhou, S.~Khemmarat, and L.~Gao, ``The impact of youtube recommendation
  system on video views,'' in \emph{Proc. ACM IMC}, 2010.

\bibitem{gomez2016netflix}
C.~Gomez-Uribe and N.~Hunt, ``The netflix recommender system: Algorithms,
  business value, and innovation,'' \emph{ACM Transactions on Management
  Information Systems (TMIS)}, vol.~6, no.~4, p.~13, 2016.

\bibitem{abdollahpouri2019multi}
H.~Abdollahpouri and R.~Burke, ``Multi-stakeholder recommendation and its
  connection to multi-sided fairness,'' in \emph{Proc. RMSE workshop at ACM
  RecSys}, 2019.

\bibitem{burke2017multisided}
R.~Burke, ``Multisided fairness for recommendation,'' in \emph{Workshop on
  Fairness, Accountability, Transparency in Machine Learning}, 2017.

\bibitem{burke2018balanced}
R.~Burke, N.~Sonboli, and A.~Ordonez-Gauger, ``Balanced neighborhoods for
  multi-sided fairness in recommendation,'' in \emph{Conference on Fairness,
  Accountability and Transparency}, 2018, pp. 202--214.

\bibitem{edizel2020fairecsys}
B.~Edizel, F.~Bonchi, S.~Hajian, A.~Panisson, and T.~Tassa, ``Fairecsys:
  Mitigating algorithmic bias in recommender systems,'' \emph{International
  Journal of Data Science and Analytics}, vol.~9, no.~2, pp. 197--213, 2020.

\bibitem{patro2020incremental}
G.~K. Patro, A.~Chakraborty, N.~Ganguly, and K.~Gummadi, ``Incremental fairness
  in two-sided market platforms: On smoothly updating recommendations,'' in
  \emph{Proc. AAAI conf. on Artificial Intelligence}, 2020.

\bibitem{pessach2020algorithmic}
D.~Pessach and E.~Shmueli, ``Algorithmic fairness,'' \emph{arXiv preprint
  arXiv:2001.09784}, 2020.

\bibitem{Sacharidis2019ACA}
D.~Sacharidis, K.~Mouratidis, and D.~Kleftogiannis, ``A common approach for
  consumer and provider fairness in recommendations,'' in \emph{Proc. ACM
  RecSys (Late-breaking Results,)}, 2019.

\bibitem{steck2018calibrated}
H.~Steck, ``Calibrated recommendations,'' in \emph{Proc. ACM RecSys}, 2018.

\bibitem{yang2017measuring}
K.~Yang and J.~Stoyanovich, ``Measuring fairness in ranked outputs,'' in
  \emph{Proc. SSDBM}, 2017.

\bibitem{liu2018personalizing}
W.~Liu and R.~Burke, ``Personalizing fairness-aware re-ranking,'' in
  \emph{Proc. FATREC workshop at ACM RecSys}, 2018.

\bibitem{sarwar2001item}
B.~Sarwar, G.~Karypis, J.~Konstan, and J.~Riedl, ``Item-based collaborative
  filtering recommendation algorithms,'' in \emph{Proc. ACM WWW}, 2001.

\bibitem{covington2016deep}
P.~Covington, J.~Adams, and E.~Sargin, ``Deep neural networks for {Y}ou{T}ube
  recommendations,'' in \emph{Proc. ACM RecSys}, 2016.

\bibitem{mehrotra2018towards}
R.~Mehrotra, J.~McInerney, H.~Bouchard, M.~Lalmas, and F.~Diaz, ``Towards a
  fair marketplace: Counterfactual evaluation of the trade-off between
  relevance, fairness \& satisfaction in recommendation systems,'' in
  \emph{Proc. ACM CIKM}, 2018.

\bibitem{lan2010axiomatic}
T.~Lan, D.~Kao, M.~Chiang, and A.~Sabharwal, ``An axiomatic theory of fairness
  in network resource allocation,'' in \emph{IEEE INFOCOM}, 2010.

\bibitem{lastfm-related-content-dataset}
``Last.fm dataset,'' \url{https://labrosa.ee.columbia.edu/millionsong/lastfm}.

\bibitem{movielens-related-dataset}
F.~M. Harper and J.~A. Konstan, ``The movielens datasets: History and
  context,'' \emph{ACM Trans. on Interactive Intelligent Systems (TiiS)}, 2016.

\bibitem{langville2004deeper}
A.~N. Langville and C.~D. Meyer, ``Deeper inside pagerank,'' \emph{Internet
  Mathematics}, vol.~1, no.~3, pp. 335--380, 2004.

\bibitem{xiao2017fairness}
L.~Xiao, Z.~Min, Z.~Yongfeng, G.~Zhaoquan, L.~Yiqun, and M.~Shaoping,
  ``Fairness-aware group recommendation with pareto-efficiency,'' in
  \emph{Proc. ACM RecSys}, 2017.

\end{thebibliography}
\end{document}